\newcommand{\tj}{\textsc{TJ-reconfiguration}}
\title{Token Jumping in minor-closed classes}
\author{Nicolas Bousquet \inst{1} \thanks{The author is partially supported by ANR project STINT (ANR-13-BS02-0007).}\and
Arnaud Mary \inst{2} \thanks{The author is partially supported by ANR project GraphEn (ANR-15-CE40-0009).}
\and Aline Parreau \inst{3}\thanks{The author is partially supported by ANR project GAG (ANR-14-CE25-0006).}
}
\institute{Laboratoire G-SCOP, CNRS, Univ. Grenoble Alpes, Grenoble, France.
. \and Univ Lyon, Universit\'e Lyon 1, LBBE CNRS UMR 5558, F-69622 Lyon, France.
 \and Univ Lyon, Universit\'e Lyon 1, LIRIS UMR CNRS 5205, F-69621, Lyon, France.
}
\begin{document}

\maketitle

\begin{abstract}
Given two $k$-independent sets $I$ and $J$ of a graph $G$, one can ask if it is possible to transform the one into the other in such a way that, at any step, we replace one vertex of the current independent set by another while keeping the property of being independent. Deciding this problem, known as the Token Jumping (TJ) reconfiguration problem, is PSPACE-complete even on planar graphs. Ito et al. proved in 2014 that the problem is FPT parameterized by $k$ if the input graph is $K_{3,\ell}$-free.

We prove that the result of Ito et al. can be extended to any $K_{\ell,\ell}$-free graphs. In other words, if $G$ is a $K_{\ell,\ell}$-free graph, then it is possible to decide in FPT-time if $I$ can be transformed into $J$. As a by product, the TJ-reconfiguration problem is FPT in many well-known classes of graphs such as any minor-free class.
\end{abstract}

\section{Introduction}

Reconfiguration problems arise when, given an instance of a problem and a solution to it, we make elementary changes to transform the current solution into another. The objective can be to sample a solution at random, to generate all possible solutions, or to reach a certain desired solution.
Many types of reconfiguration problems have been introduced and studied in various fields. For instance reconfiguration of graph colorings~\cite{BonamyB13,Feghali0P15}, Kempe chains~\cite{BonamyBFJ15,FeghaliJP15}, shortest paths~\cite{Bonsma13}, satisfiability problems~\cite{Gopalan09} or dominating sets~\cite{LokshtanovMPRS15} have been studied. For a survey on reconfiguration problems, the reader is referred to~\cite{vHeuvel13}.  Our reference problem is the independent set problem.

In the whole paper, $G=(V,E)$ is a graph where $n$ denotes the size of $V$, and $k$ is an integer. For standard definitions and notations on graphs, we refer the reader to~\cite{Diestel2005}. A \emph{$k$-independent set} of $G$ is a subset of vertices of size $k$ such that no two elements of $S$ are adjacent. The $k$-independent set reconfiguration graph is a graph where vertices are $k$-independent sets and two independent sets are adjacent if they are ``close'' to each other.

Three possible definitions of adjacency between independent sets have been introduced.
In the \emph{Token Addition Removal} (TAR) model~\cite{BonamyB14a,MouawadNRSS13}, two independent sets $I, J$ are adjacent if they differ on exactly one vertex (i.e. if there exists a vertex $u$ such that $I = J \cup \{u\}$ or the other way round). In the \emph{Token Sliding} (TS) model~\cite{BonamyB16,DemaineDFHIOOUY14,HearnD05}, vertices are moved along edges of the graph.  In the \emph{Token Jumping} (TJ) model~\cite{BonsmaKW14,Ito2011,ItoKOSUY2014,KaminskiMM12}, two $k$-independent sets $I,J$ are adjacent if the one can be obtained from the other by replacing a vertex with another one. In other words there exist $u \in I$ and $v \in J$ such that $I=(J \setminus \{v \}) \cup \{ u \}$.
In this paper, we concentrate on the Token Jumping model.

 The \emph{$k$-TJ-reconfiguration graph of $G$}, denoted $TJ_k(G)$, is the graph whose vertices are all $k$-independent sets of $G$ (of size exactly $k$), with the adjacency defined above. The \tj{} problem is defined as follows:
 \begin{quote}
  \textsc{Token Jumping (TJ)-reconfiguration} \\
  \textbf{Input:} A graph $G$, an integer $k$, two $k$-independent sets $I$ and $J$. \\
  \textbf{Output:} YES if and only if $I$ and $J$ are in the same connected component of $TJ_k(G)$.
 \end{quote}

The \tj{} problem is PSPACE-complete even for planar graphs with maximum degree $3$~\cite{HearnD05}, for perfect graphs~\cite{KaminskiMM12}, and for graphs of bounded bandwidth~\cite{Wrochna14}. On the positive side, Bonsma et al.~\cite{BonsmaKW14} proved that it can be decided in polynomial time in claw-free graphs. Kaminski et al.~\cite{KaminskiMM12} gave a linear-time algorithm on even-hole-free graphs.

\paragraph{Parameterized algorithm.}
A problem $\Pi$ is \emph{FPT} parameterized by a parameter $k$ if there exists a function $f$ and a polynomial $P$ such that for any instance $\mathcal{I}$ of $\Pi$ of size $n$ and of parameter $k$, the problem can be decided in $f(k) \cdot P(n)$.
A problem $\Pi$ \emph{admits a kernel} parameterized by $k$ (for a function $f$) if for any instance $I$ of size $n$ and parameter $k$, one can find in polynomial time, an instance $I'$ of size $f(k)$ such that $I'$ is positive if and only if $I$ is positive. A folklore result ensures that the existence of a kernel is equivalent to the existence of an FPT algorithm, but the function $f$ might be exponential. A kernel is \emph{polynomial} if $f$ is a polynomial function.

Ito et al.~\cite{ItoKOSUY2014} proved that the TJ-reconfiguration problem is W[1]-hard\footnote{Under standard algorithmic assumptions, W[1]-hard problems do not admit FPT algorithms.} parameterized by $k$. On the positive side they show that the problem becomes FPT parameterized by both $k$ and the maximum degree of $G$. Mouawad et al.~\cite{MouawadNRW14} proved that the problem is W[1]-hard parameterized by the treewidth of the graph but is FPT parameterized by the length of the sequence plus the treewidth of the graph.
In~\cite{ItoKO2014}, the authors showed that the \tj{} problem is FPT on planar graphs parameterized by $k$. They actually remarked that their proof can be extended to $K_{3,\ell}$-free graphs, i.e. graphs that do not contain any copy of $K_{3,\ell}$ as a subgraph.
In this paper (Sections \ref{sec:bounds} and \ref{sec:algo}), we prove that the result of~\cite{ItoKO2014} can be extended to any $K_{\ell,\ell}$-free graphs. More formally we show the following:

\begin{theorem}\label{thm:fpt}
 \tj{} is FPT parameterized by $k$ in $K_{\ell,\ell}$-free graphs. Moreover there exists a function $h$ such that \tj{} admits a polynomial kernel of size $\mathcal{O}(h(\ell) \cdot k^{\ell3^\ell})$ if $\ell$ is a fixed constant.
\end{theorem}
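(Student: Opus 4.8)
The plan is to reduce Theorem~\ref{thm:fpt} entirely to a kernelization statement and then to invoke the folklore equivalence between kernels and FPT algorithms recalled above. Concretely, once we have produced an equivalent instance $(G',k,I,J)$ whose number of vertices is bounded by $N=\mathcal{O}(h(\ell)\cdot k^{\ell 3^\ell})$, the reconfiguration question can be settled by brute force: the reconfiguration graph $TJ_k(G')$ has at most $\binom{N}{k}\le N^{k}$ vertices, so a breadth-first search from $I$ decides whether $J$ lies in its component in time $N^{\mathcal{O}(k)}$, that is, a function of $k$ and $\ell$ multiplied by the polynomial time needed to build $G'$. For fixed $\ell$ this is simultaneously an FPT algorithm and a polynomial kernel, so the whole theorem rests on iteratively deleting \emph{irrelevant} vertices down to $\mathcal{O}(h(\ell)\cdot k^{\ell 3^\ell})$ of them.

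The one structural fact I would extract from the $K_{\ell,\ell}$-free hypothesis is a K\H{o}v\'ari--S\'os--Tur\'an--type counting bound: for any vertex set $A$, at most $(\ell-1)\binom{|A|}{\ell}=\mathcal{O}(|A|^{\ell})$ vertices have $\ell$ or more neighbours in $A$, since otherwise $\ell$ of them together with a common $\ell$-subset of $A$ would span a copy of $K_{\ell,\ell}$. Relative to a fixed ``core'' $C\supseteq I\cup J$ this means that all but polynomially many vertices meet $C$ in fewer than $\ell$ vertices, so their \emph{profile} $N(v)\cap C$ ranges over only $\mathcal{O}(|C|^{\ell-1})$ possibilities. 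The reduction rule is then to delete a vertex from any profile class that contains more than $k+1$ vertices. One direction of safety is immediate: $G'-v$ is an induced subgraph, every $k$-independent set of it is one of $G'$, and jumps stay valid, so reconfigurability in $G'-v$ implies it in $G'$.

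The heart of the argument is the converse. Given a reconfiguration sequence in $G'$, I would reroute it to avoid a deleted vertex $v$: each time a token occupies $v$, the other $k-1$ tokens cover at most $k-1$ vertices, so among the at least $k+1$ members of $v$'s class surviving in $G'-v$, at least one, say $v'$, is unoccupied, and since $v'$ shares $v$'s profile the token may jump to $v'$ instead. Making this rigorous is where the real difficulty lies, for two reasons: the other tokens may themselves sit on retained vertices lying \emph{outside} $C$, so equality of profiles on $C$ is too weak to guarantee that $v'$ is non-adjacent to them; and a single substitution of $v$ by $v'$ must stay consistent with every later move of that token, which forces the rerouting to be defined globally rather than move by move.

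Both issues push the profile to be taken with respect to the \emph{whole} kernel rather than to $I\cup J$, so the construction of $K$ must be a fixed point: $K$ has to contain $I\cup J$, every vertex with at least $\ell$ neighbours in $K$, and $k+1$ representatives of every profile realised on $K$ (a profile being naturally refined by recording, for each of a bounded number of parts of $K$, whether $v$ is fully, partially, or non-adjacent to it --- the trichotomy that I expect to be the source of the base $3$ in the exponent). I anticipate the main obstacle to be proving that this fixed point stabilises at size $\mathcal{O}(h(\ell)\cdot k^{\ell 3^\ell})$ rather than exploding: each marking round multiplies the current size by about a factor $k$ and raises it to a power controlled by $\ell$ through the counting bound, so the recursion depth must be capped and the exponent tracked precisely to land on $\ell 3^\ell$. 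Once the fixed point and the global rerouting are in hand, exhausting the deletion rule yields the kernel $G'$, and the brute-force step of the first paragraph finishes both claims of the theorem.
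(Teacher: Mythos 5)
Your outer frame (kernelize, then brute force on $TJ_k(G')$) matches the paper, and you correctly isolate the central difficulty: equality of profiles on a core is too weak when other tokens sit outside the core, and the rerouting must be global. But your proposed fix does not close the gap. First, keeping $k+1$ representatives per exact profile class is unsound no matter how the core $K$ is chosen: a single token on a vertex $w \notin K$ with fewer than $\ell$ neighbours in $K$ can still be adjacent to \emph{all} $k+1$ representatives of a class, because heavy adjacency into the class is not adjacency into $K$, so the ``some representative is unoccupied and available'' pigeonhole fails. Second, your fixed point is not shown to stabilise, and there is good reason to doubt it: the rule ``put into $K$ every vertex with at least $\ell$ neighbours in $K$'' can grow $|K|$ by a factor of order $|K|^{\ell}$ at each round, so without an a priori cap on the number of rounds the construction does not terminate at the claimed size. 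These are exactly the two places where the paper's argument has content that your sketch lacks.

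The paper resolves both points quantitatively rather than via exact profiles. Similarity classes are taken only with respect to a small set $X$ of important vertices (initially $I \cup J$); big classes of rank at most $1$ give an immediate YES (Lemma~\ref{lem:rank1}); and a big class of rank $2 \le j \le \ell-1$ is handled by adding to $X$ only the vertices adjacent to a $\frac{1}{8\ell}$-fraction of the class --- there are at most $(3\ell)^{2\ell}$ such vertices by a bipartite K\"{o}v\'ari--S\'os--Tur\'an bound (Corollary~\ref{coro:nblargefraction}) --- for exactly $2k+1$ rounds. Since the heavy sets $Y_0,\ldots,Y_{2k}$ of successive rounds are pairwise disjoint and $|I' \cup J'| \le 2k$, pigeonhole yields a round whose heavy set avoids all tokens; each token then sees at most a $\frac{1}{4\ell}$-fraction of the corresponding slice of the class, so a half of that slice is untouched and, having size at least $k\ell(4k)^{\ell}$, contains a $k$-independent set (Corollary~\ref{coro:indepk}) into which \emph{all} tokens are parked before exiting towards $J'$ (Lemmas~\ref{clm:lastcase1} and~\ref{lem:lastcase2}); reduction then replaces the whole class by a fresh independent set of size $k$ with the same neighbourhood in $X$, not by $k+1$ original representatives. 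The cap $2k+1$ is precisely the termination mechanism missing from your fixed point. Incidentally, your guess about the exponent is off: the base $3$ in $k^{\ell 3^{\ell}}$ does not come from a full/partial/non-adjacency trichotomy, but from the recursion bounding the number $N_j$ of big rank-$j$ classes, $N_j \le \bigl(N_{j-1}\cdot(3\ell)^{2\ell}\bigr)^{j}$, whose dominant term squares the previous bound and forces exponents growing like $3^{j}$ (Lemma~\ref{lem:numbersets}).
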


As a consequence, Theorem~\ref{thm:fpt} ensures that \tj{} admits a polynomial kernel on many classical graph classes such as bounded degree graphs, bounded treewidth graphs, graphs of bounded genus or $\mathcal{H}$-(topological) minor free graphs where $\mathcal{H}$ is a finite collection of graphs.

The proof of~\cite{ItoKO2014} consists in partitioning the graph into classes according to its neighborhood in $I \cup J$ (two vertices lie in the same class if they have the same neighborhood in $I \cup J$). The authors showed that (i) some classes have bounded size (namely those with at least $3$ neighbors in $I \cup J$); (ii) if some classes are large enough, one can immediately conclude (namely those with at most one neighbor in $I \cup J$); (iii) we can ``reduce'' classes with two neighbors in $I \cup J$ if they are too large.
As they observed, this proof cannot be directly extended to $K_{\ell,\ell}$-free graphs for $\ell \geq 4$. In this paper, we develop new tools to ``reduce'' classes. Namely, we iteratively apply a lemma of K\"{o}v\'ari, S\'os, Tur\'an \cite{KovariST54} to find a subset $X$ of vertices such that $X$ has size at most $f(k,\ell)$, contains $I \cup J$ and is such that for every $Y \subset X$, if the set of vertices with neighborhood $Y$ in $X$ is too large, then it can be replaced by an independent set of size $k$.

Note finally that the \tj{} problem is W[1]-hard parameterized only by $\ell$ since graphs of treewidth at most $\ell$ are $K_{\ell+1,\ell+1}$-free graphs. And the \tj{} problem is W[1]-hard parameterized by the treewidth~\cite{MouawadNRW14}. We left the existence of an FPT algorithm parameterized by $k+\ell$ as an open question.

\paragraph{Hardness for graphs of bounded VC-dimension.}
A natural way of extending our result would consist in proving it for graphs of bounded VC-dimension. The VC-dimension is a classical way of defining the complexity of a hypergraph that received considerable attention in various fields, from learning to discrete geometry. Informally, the VC-dimension is the maximum size of a set on which the hyperedges of the hypergraph intersect on all possible ways. A formal definition will be provided in Section~\ref{sec:vcdim}. In this paper, we define the VC-dimension of a graph as the VC-dimension of its closed neighborhood hypergraph, which is the most classical definition used in the literature (see~\cite{BousquetLLPT15} for instance).

Bounded VC-dimension graphs generalize $K_{\ell,\ell}$-free graphs since $K_{\ell,\ell}$-free graphs have VC-dimension at most $\ell + \log \ell$. One can naturally ask if our results can be extended to graphs of bounded VC-dimension. Unfortunately the answer is negative since we can obtain as simple corollaries of existing results that the \tj{} problem is NP-complete on graphs of VC-dimension $2$ and $W[1]$-hard parameterized by $k$ on graphs of VC-dimension $3$. We complete these results in Section \ref{sec:vcdim} by showing that the problem is polynomial on graphs of VC-dimension $1$. The parameterized complexity status remains open on graphs of VC-dimension~$2$.
\vspace{-5pt}

\section{Density of $K_{\ell,\ell}$-free graphs}\label{sec:bounds}
K\"{o}v\'ari, S\'os, Tur\'an~\cite{KovariST54} proved that any $K_{\ell,\ell}$-free graph has a sub-quadratic number of edges. The initial bound of~\cite{KovariST54} was later improved, see e.g.~\cite{Furedi}. 

\begin{theorem}[K\"{o}v\'ari, S\'os, Tur\'an~\cite{KovariST54}]\label{thm:kst}
  Let $G$ be a $K_{\ell,\ell}$-free graph on $n$ vertices. Then $G$ has at most $ex(n,K_{\ell,\ell})$ edges, with
\[ex(n,K_{\ell,\ell}) \leq \Big(\frac{\ell-1}{2}\Big)^{1/\ell} \cdot n^{2-1/\ell}+\frac 12 (\ell-1) n. \]
\end{theorem}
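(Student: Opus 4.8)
The displayed statement is the classical Kővári–Sós–Turán bound, so the plan is to reconstruct its standard double-counting proof. The key structural observation is that the $K_{\ell,\ell}$-free hypothesis is equivalent to the assertion that every set $A$ of $\ell$ vertices has at most $\ell-1$ common neighbors: indeed, if some $\ell$-set $A$ had $\ell$ common neighbors forming a set $B$, then since no vertex is adjacent to itself we have $A \cap B = \emptyset$, and every vertex of $B$ is joined to every vertex of $A$, which exhibits a $K_{\ell,\ell}$ subgraph, a contradiction.

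First I would count, in two ways, the number $N$ of pairs $(v, A)$ in which $A$ is an $\ell$-element subset of the neighborhood $N(v)$. Grouping by the vertex $v$ gives $N = \sum_{v} \binom{d(v)}{\ell}$, where $d(v)$ is the degree of $v$. Grouping instead by the $\ell$-set $A$, each admissible pair corresponds to a common neighbor of $A$, so $N = \sum_A \big(\text{number of common neighbors of } A\big)$, the sum ranging over all $\binom{n}{\ell}$ subsets of size $\ell$. By the structural observation each term is at most $\ell - 1$, hence $\sum_v \binom{d(v)}{\ell} \leq (\ell-1)\binom{n}{\ell}$.

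Next I would lower-bound the left-hand side using the convexity of the map $x \mapsto \binom{x}{\ell}$. Writing $\bar d = \frac{1}{n}\sum_v d(v) = \frac{2e}{n}$ for the average degree, Jensen's inequality gives $\sum_v \binom{d(v)}{\ell} \geq n \binom{\bar d}{\ell}$, so that $n\binom{\bar d}{\ell} \leq (\ell-1)\binom{n}{\ell}$. It then remains to extract a bound on $e$ by elementary estimates: bounding $\binom{\bar d}{\ell}$ below by $(\bar d - \ell + 1)^\ell/\ell!$ and $\binom{n}{\ell}$ above by $n^\ell/\ell!$ turns this into $(\bar d - \ell + 1)^\ell \leq (\ell-1)\,n^{\ell-1}$, whence $\bar d \leq (\ell-1)^{1/\ell} n^{1-1/\ell} + (\ell-1)$; multiplying by $n/2$ recovers a bound of the stated shape.

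The argument is almost entirely routine. The two points that deserve attention are the degenerate regime $\bar d < \ell-1$, where the convexity step must be handled separately (but there $e = \mathcal{O}(\ell n)$ and the conclusion holds trivially), and the choice of elementary estimates for the two binomial coefficients. Since the naive estimates above already deliver a leading constant $\tfrac12(\ell-1)^{1/\ell}$, which is no larger than the stated $\left(\frac{\ell-1}{2}\right)^{1/\ell}$, the claimed inequality follows \emph{a fortiori}. I therefore expect no genuine obstacle beyond organizing the double count and invoking Jensen's inequality correctly; the only mildly delicate bookkeeping is verifying the disjointness of $A$ and its common neighbors in the structural observation.
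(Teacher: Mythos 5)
Your proof is correct, but note that the paper itself gives no proof of this statement: Theorem~\ref{thm:kst} is imported as a black box from K\"{o}v\'ari--S\'os--Tur\'an~\cite{KovariST54}, so there is no internal argument to compare against. What you have written is a faithful reconstruction of the classical double-counting proof: the disjointness of $A$ from its common neighbors (needed to exhibit an actual $K_{\ell,\ell}$ subgraph, which suffices here since the paper forbids $K_{\ell,\ell}$ as a subgraph, not as an induced subgraph) is handled correctly, the count $\sum_v \binom{d(v)}{\ell} \leq (\ell-1)\binom{n}{\ell}$ is right, and you correctly flag the degenerate regime $\bar d < \ell-1$, where $e \leq \frac{1}{2}(\ell-1)n$ already satisfies the claim. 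Your final constant $\frac{1}{2}(\ell-1)^{1/\ell}$ is indeed at most the stated $\left(\frac{\ell-1}{2}\right)^{1/\ell} = (\ell-1)^{1/\ell}/2^{1/\ell}$ because $2^{1/\ell} \leq 2$, so your bound implies the theorem as stated \emph{a fortiori}. One small point of hygiene: for Jensen's inequality to apply uniformly you should use the standard convex extension of $x \mapsto \binom{x}{\ell}$ (the falling-factorial formula truncated to $0$ for $x < \ell-1$), which in fact makes your separate treatment of the degenerate case unnecessary, though handling it explicitly as you do is equally valid.
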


As a corollary, for every $\ell$, there exists a polynomial function $P_{\ell}$ such that every $K_{\ell,\ell}$-free graph with at least $n \geq P_{\ell}(k)$ vertices contains a stable set of size at least $k$.
Note that in the following statements, we did not make any attempt in order to optimize the functions.

\begin{corollary}\label{coro:indepk}
 Every $K_{\ell,\ell}$-free graph with $k\ell(4k)^\ell$ vertices contains an independent set of size $k$.
\end{corollary}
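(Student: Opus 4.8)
The plan is to combine the density bound of Theorem~\ref{thm:kst} with the classical fact that a graph with few edges has a large independent set. Precisely, I would use the Tur\'an-type (Caro--Wei) lower bound stating that any graph $H$ on $n$ vertices with $m$ edges and average degree $\bar d = 2m/n$ satisfies $\alpha(H) \geq \frac{n}{\bar d + 1} = \frac{n^2}{2m+n}$, where $\alpha(H)$ denotes the independence number. Thus it suffices to prove that, when $n = k\ell(4k)^\ell$, the graph $G$ has so few edges that $\frac{n^2}{2m+n} \geq k$; equivalently, that $2m + n \leq n^2/k$.

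First I would feed the K\"ov\'ari--S\'os--Tur\'an bound into this inequality. Since $G$ is $K_{\ell,\ell}$-free, Theorem~\ref{thm:kst} gives $2m \leq 2\left(\frac{\ell-1}{2}\right)^{1/\ell} n^{2-1/\ell} + (\ell-1)n$, hence $2m + n \leq 2\left(\frac{\ell-1}{2}\right)^{1/\ell} n^{2-1/\ell} + \ell n$. The target inequality then splits naturally into two parts, each of which I would require to be at most $\frac{n^2}{2k}$: the sub-quadratic main term $2\left(\frac{\ell-1}{2}\right)^{1/\ell} n^{2-1/\ell}$ and the linear lower-order term $\ell n$.

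For the main term, $2\left(\frac{\ell-1}{2}\right)^{1/\ell} n^{2-1/\ell} \leq \frac{n^2}{2k}$ is equivalent, after raising to the power $\ell$, to $n \geq \frac{\ell-1}{2}(4k)^\ell$; for the linear term, $\ell n \leq \frac{n^2}{2k}$ is equivalent to $n \geq 2k\ell$. Both thresholds are dominated by the chosen value $n = k\ell(4k)^\ell$: indeed $k\ell(4k)^\ell \geq \frac{\ell-1}{2}(4k)^\ell$ because $k\ell \geq \frac{\ell-1}{2}$, and $k\ell(4k)^\ell \geq 2k\ell$ because $(4k)^\ell \geq 2$. This yields $2m+n \leq n^2/k$ and therefore $\alpha(G) \geq k$, as wanted.

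The computations here are entirely routine once the right tool is fixed, so I do not expect a genuine obstacle; the only point requiring a little care is the lower-order linear term of the K\"ov\'ari--S\'os--Tur\'an estimate, which must not be absorbed too crudely into the main term (doing so would force a larger threshold than $k\ell(4k)^\ell$). Splitting the budget $n^2/k$ into two equal halves, one for each term, is what makes the stated bound go through with room to spare---consistent with the remark that no effort was made to optimize the constants.
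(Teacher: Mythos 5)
Your proof is correct and all the computations check out: with $n=k\ell(4k)^\ell$, the main K\"ov\'ari--S\'os--Tur\'an term is at most $\frac{n^2}{2k}$ exactly when $n \geq \frac{\ell-1}{2}(4k)^\ell$, the linear term when $n \geq 2k\ell$, both thresholds are comfortably met, and then $\alpha(G) \geq \frac{n^2}{2m+n} \geq k$ follows from the Tur\'an--Caro--Wei bound. Your route does differ from the paper's in a genuine, if mild, way: you apply Theorem~\ref{thm:kst} once, to the whole graph, and delegate the extraction of the independent set to the classical average-degree bound $\alpha(H) \geq n/(\bar d + 1)$, invoked as a black box. The paper instead stays self-contained: it first derives from Theorem~\ref{thm:kst} that every $K_{\ell,\ell}$-free graph on at least $\ell(4k)^\ell$ vertices has a vertex of degree less than $n/k$, and then builds the independent set by induction on $k$, greedily taking a minimum-degree vertex and deleting its closed neighborhood; the extra factor $k$ in the bound $k\ell(4k)^\ell$ is there precisely so that the surviving $(k-1)n/k$ vertices remain above the degree threshold at each of the $k$ peeling steps. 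The two arguments are morally equivalent---the Caro--Wei bound is itself proved by exactly such a greedy peeling---so the difference is one of packaging: yours is shorter and non-inductive at the cost of citing an external (but entirely standard) lemma, while the paper's uses nothing beyond Theorem~\ref{thm:kst}. Your closing observation about splitting the budget $\frac{n^2}{k}$ into two equal halves mirrors the paper's own computation, which bounds each of the two KST terms by $\frac{n^2}{4k}$ inside its min-degree argument.
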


\begin{proof}
Let us first prove the following fact.
Every $K_{\ell,\ell}$-free graph $G'$ on at least $n(k,\ell)= \ell(4k)^\ell$ vertices has a vertex of degree less than $\frac{n}{k}$.
Assume by contradiction that every vertex has degree at least $\frac{n}{k}$. So the number of edges of the graph is at least $\frac{n^2}{2k}$. Since $n\geq\ell(4k)^\ell$, we have in particular $n>2k\ell$ and thus $\frac 12 \ell n < \frac{n^2}{4k}$. We also have $n>\frac{\ell(4k)^\ell}{2}$ and thus $(\frac{\ell}{2n})^{1/\ell}<\frac{1}{4k}$. Using Theorem~\ref{thm:kst}, we obtain the following upper bound on the number of edges:
 \begin{align*}
  |E| \leq  \Big(\frac{\ell-1}{2}\Big)^{1/\ell} \cdot n^{2-1/\ell}+\frac 12 (\ell-1) n < & \left(\frac{\ell}{2n}\right)^{1/\ell}\cdot n^{2} + \frac 12 \ell n\\
<& \frac{n^2}{4k}+\frac{n^2}{4k}= \frac{n^2}{2k}
 \end{align*}
which gives a contradiction with the lower bound on the number of edges.

To conclude, let us prove the corollary by induction on $k$. For $k=1$, the result is straightforward.
Consider a graph $G$ on at least $n'(k,\ell)=k \cdot \ell(4k)^{\ell}$ vertices with $k \geq 2$. Let $z$ be a vertex of minimum degree. Since the graph has size at least $n'(k,\ell) \geq n(k,\ell)$, there exists a vertex of degree less than $\frac{n}{k}$. Add $z$ in the independent set and delete $N(z) \cup \{ z \}$ from $G$. The remaining graph has size at least $(k-1)\frac{n}{k} \geq (k-1)\cdot \ell(4k)^\ell \geq n'(k-1,\ell)$. By induction, it has an independent set $I$ of size $(k-1)$. Then $I \cup \{ z \}$ is an independent set since all the neighbors of $z$ in $C$ have been deleted, which concludes the proof. \qed
\end{proof}

We will also need a ``bipartite version'' of both Theorem~\ref{thm:kst} and Corollary~\ref{coro:indepk}.
\begin{theorem}[K\"{o}v\'ari, S\'os, Tur\'an~\cite{KovariST54}]\label{thm:kstbip}
Let $G=((A,B),E)$ be a $K_{\ell,\ell}$-free bipartite graph where $|A|=n$ and $|B|=m$. The number of edges of $G$ is at most
\[ex(n,m,K_{\ell,\ell}) \leq (\ell-1)^{1/\ell} \cdot (n-\ell+1)\cdot m^{1-1/\ell}+ (\ell-1) m. \]
\end{theorem}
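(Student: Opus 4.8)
The plan is to run the standard Kővári–Sós–Turán double-counting argument in its bipartite form, and then to extract the slightly sharper coefficient $(n-\ell+1)$ (rather than $n$) by a monotonicity observation at the very end. First I would record the combinatorial meaning of the hypothesis: since $G$ is bipartite with all edges between $A$ and $B$, any copy of $K_{\ell,\ell}$ in $G$ must have one side entirely contained in $A$ and the other entirely in $B$ (an induction on the vertices shows that a side cannot mix $A$ and $B$, as there are no edges inside $A$ or inside $B$). Hence $G$ being $K_{\ell,\ell}$-free is equivalent to the statement that every $\ell$-subset $S\subseteq A$ has at most $\ell-1$ common neighbours in $B$.

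Next I would double-count the set of pairs $(b,S)$ with $b\in B$ and $S$ an $\ell$-subset of $N(b)\subseteq A$. Counting by $b$ gives $\sum_{b\in B}\binom{d(b)}{\ell}$, where $d(b)=|N(b)|$; counting by $S$, and using the reformulated hypothesis that each $S$ lies in the neighbourhood of at most $\ell-1$ vertices of $B$, gives the upper bound $(\ell-1)\binom{n}{\ell}$. Writing $e=|E|$ and $t=e/m$ for the average degree on the $B$-side, convexity of $x\mapsto\binom{x}{\ell}$ together with Jensen's inequality yields $m\binom{t}{\ell}\le\sum_{b}\binom{d(b)}{\ell}$, so that $m\binom{t}{\ell}\le(\ell-1)\binom{n}{\ell}$. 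Clearing $\ell!$, this reads $\prod_{i=0}^{\ell-1}(t-i)\le\frac{\ell-1}{m}\prod_{i=0}^{\ell-1}(n-i)$.

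The only nonroutine step, and the one that produces $(n-\ell+1)$ instead of $n$, is the passage from this product inequality to the claimed bound. Here I would use that $t\le n$ (each $b\in B$ has at most $n$ neighbours), which makes the ratio $i\mapsto\frac{t-i}{n-i}$ nonincreasing on $0\le i\le\ell-1$, since its derivative in $i$ equals $\frac{t-n}{(n-i)^2}\le 0$. Consequently each factor is at least $\frac{t-\ell+1}{n-\ell+1}$, so $\left(\frac{t-\ell+1}{n-\ell+1}\right)^{\ell}\le\prod_{i=0}^{\ell-1}\frac{t-i}{n-i}\le\frac{\ell-1}{m}$. Taking $\ell$-th roots gives $t-\ell+1\le(\ell-1)^{1/\ell}(n-\ell+1)m^{-1/\ell}$, and multiplying the resulting bound on $t$ by $m$ delivers exactly $e\le(\ell-1)^{1/\ell}(n-\ell+1)m^{1-1/\ell}+(\ell-1)m$. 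Finally I would dispose of the degenerate cases separately: if $n<\ell$ there is no $\ell$-subset of $A$ and the statement is vacuous, and if the average degree satisfies $t<\ell-1$ then $e<(\ell-1)m$ and the bound holds since its first term is nonnegative. This last case also side-steps the mild convexity caveat (the polynomial $\binom{x}{\ell}$ is genuinely convex only for $x\ge\ell-1$), ensuring that Jensen's inequality is applied in the legitimate range.
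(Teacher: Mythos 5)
The paper does not reprove this theorem---it is quoted from K\"{o}v\'ari--S\'os--Tur\'an~\cite{KovariST54}---and your argument is precisely the classical double-counting proof behind the cited bound: counting pairs $(b,S)$ with $S$ an $\ell$-subset of $N(b)$ gives $\sum_{b\in B}\binom{d(b)}{\ell}\le(\ell-1)\binom{n}{\ell}$, convexity (via the truncated convex extension of $x\mapsto\binom{x}{\ell}$, which agrees with the binomial coefficient at all integer degrees, including those below $\ell-1$) passes to the average degree $t=e/m$, and your monotonicity observation on $i\mapsto(t-i)/(n-i)$ correctly extracts the sharper factor $n-\ell+1$; this is a correct reconstruction. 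The one blemish is dismissing $n<\ell$ as ``vacuous'': the inequality still asserts something in that range, and as stated it can actually fail when both $n$ and $m$ are below $\ell-1$ (for $\ell=3$ a single edge has $1>2-2^{1/3}$), but this degeneracy is inherent in the statement as quoted and is harmless for its use in Corollary~\ref{coro:nblargefraction}, where both sides have size at least $(3\ell)^{2\ell}$.
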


\begin{corollary}\label{coro:nblargefraction}
Let $\ell \geq 3$.
 Let $G$ be a $K_{\ell,\ell}$-free graph and $C$ be a subset of vertices of size at least $(3\ell)^{4\ell}$.
 There are at most $(3\ell)^{2\ell}$ vertices of $G$ incident to a fraction of at least $\frac{1}{8\ell}$ of the vertices of $C$.
\end{corollary}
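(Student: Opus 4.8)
The plan is to apply the bipartite density bound of Theorem~\ref{thm:kstbip} to the incidence bipartite graph between the ``high-degree'' vertices and $C$, and then to verify that the hypothesis $|C| \geq (3\ell)^{4\ell}$ is strong enough to force the number of such vertices to be small.

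First I would fix notation: let $D$ be the set of vertices of $G$ incident to at least $\frac{|C|}{8\ell}$ vertices of $C$, write $c = |C|$ and $d = |D|$, so that the goal is $d \leq (3\ell)^{2\ell}$. I form the bipartite graph $H$ with parts $D$ and $C$ (treated as two abstract vertex classes, even though they may share vertices of $G$), joining $u \in D$ to $w \in C$ precisely when $uw \in E(G)$. The one point requiring care is that $D$ and $C$ may overlap. I would argue that $H$ is nonetheless $K_{\ell,\ell}$-free: a copy of $K_{\ell,\ell}$ in $H$ on parts $S \subseteq D$ and $T \subseteq C$ cannot reuse the same vertex $v$ on both sides, since that would require the nonexistent loop $vv$; hence $S$ and $T$ are disjoint in $G$ and form a genuine $K_{\ell,\ell}$ in $G$, contradicting the hypothesis. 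Because there are no loops, every $u \in D$ still has degree at least $\frac{c}{8\ell}$ in $H$.

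Next I would double count the edges of $H$. On one hand $|E(H)| \geq d \cdot \frac{c}{8\ell}$; on the other hand Theorem~\ref{thm:kstbip} gives $|E(H)| \leq (\ell-1)^{1/\ell}(d-\ell+1)\,c^{1-1/\ell} + (\ell-1)c$. I may assume $d \geq \ell$ (otherwise $d < \ell \leq (3\ell)^{2\ell}$ and there is nothing to prove), so that $d - \ell + 1 \leq d$. Combining the two estimates and dividing by $c$ leaves
\[ \frac{d}{8\ell} \leq (\ell-1)^{1/\ell}\, d\, c^{-1/\ell} + (\ell-1). \]

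Finally I would exploit the lower bound on $c$. Since $c \geq (3\ell)^{4\ell}$, we have $c^{-1/\ell} \leq (3\ell)^{-4} = \frac{1}{81\ell^4}$, and for $\ell \geq 3$ one has $(\ell-1)^{1/\ell} < 2$; hence the first term on the right is at most $\frac{2d}{81\ell^4}$, which is smaller than $\frac{1}{2}\cdot\frac{d}{8\ell}$ because $81\ell^3 > 32$. Absorbing it into the left-hand side yields $\frac{d}{16\ell} \leq \ell - 1 < \ell$, so that $d < 16\ell^2 \leq (3\ell)^{2\ell}$ for $\ell \geq 3$, as required. The only genuinely delicate step is the $K_{\ell,\ell}$-freeness of $H$ when $D$ and $C$ intersect; everything else is the routine arithmetic confirming that the size $(3\ell)^{4\ell}$ makes the factor $c^{-1/\ell}$ dominate.
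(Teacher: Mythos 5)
Your proof is correct, and at its core it is the same argument as the paper's: a double count of the edges between the set of high-degree vertices and $C$, with the upper bound supplied by Theorem~\ref{thm:kstbip}. Two implementation choices differ, both to your advantage. First, where the paper handles the possible overlap between the high-degree set and $C$ by passing to the bipartite graph on $(X, C\setminus X)$ and subtracting an $m^2$ error term from the lower bound on the edge count, you keep both sides whole, duplicate any shared vertex into two abstract copies, and check that the resulting bipartite incidence graph $H$ is still $K_{\ell,\ell}$-free; the step you flag as delicate is indeed sound, since $K_{\ell,\ell}$ is connected (so any copy in $H$ respects the bipartition), a vertex of $G$ appearing on both sides of the copy would force the nonexistent edge $vv$, and degrees are unaffected because $v \notin N(v)$. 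Second, the paper argues by contradiction after truncating to a witness set of size exactly $(3\ell)^{2\ell}$, whereas you bound $d$ directly and in fact obtain the much stronger estimate $d < 16\ell^2$, showing the stated bound $(3\ell)^{2\ell}$ is far from tight (consistent with the paper's remark that no attempt was made to optimize the functions). Your instantiation of Theorem~\ref{thm:kstbip} with the roles of $n$ and $m$ swapped relative to the paper's usage is legitimate by the symmetry of $K_{\ell,\ell}$, and the remaining arithmetic --- $c^{-1/\ell} \leq (3\ell)^{-4}$, $(\ell-1)^{1/\ell} < 2$ for $\ell \geq 3$, and the absorption of the first term into the left-hand side via $81\ell^3 > 32$ --- checks out.
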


\begin{proof}
 Assume by contradiction that at least $m(\ell)=(3\ell)^{2\ell}$ vertices of $G$ are incident to at least $\frac{1}{8\ell}|C|$ vertices of $C$. Let $X$ be a subset of size $(3\ell)^{2\ell}$ that satisfies this property. Let us consider the bipartite $(X,C \setminus X)$. Let us denote by $n$ the size of $C$.

Let us first give a lower bound on the number of edges. Since all the vertices of $X$ are incident to a fraction of at least $\frac{1}{8\ell}$ of the vertices of $C$, the bipartite graph induced by $(X,C \setminus X)$ has at least $m \cdot \frac{n}{8\ell}-m^2 \geq m \cdot (\frac{n}{8\ell}-m) \geq mn (\frac{1}{8\ell} - \frac{1}{(9\ell)^{2\ell}}) \geq \frac{1}{9\ell} mn$ edges.  The negative term is due to the fact that some vertices of $X$ might be included in $C$, and then some of the edges from $X$ to $C$ might be edges from $X$ to $X$ rather than edges from $X$ to $C \setminus X$.

Conversely, let us provide an upper bound on the number of edges to obtain a contradiction. By Theorem~\ref{thm:kstbip}, the number of edges is at most

 \begin{align*}
  (\ell-1)^{1/\ell} \cdot (n-\ell+1)\cdot m^{1-1/\ell}+ (\ell-1) m <&  \frac{(\ell-1)^{1/\ell}}{m^{1/\ell}} \cdot m n +  (\ell-1) m \\
  \leq& \frac{\ell^{1/\ell}}{m^{1/\ell}} \cdot m n +  \ell m \frac{n}{(3\ell)^{4\ell}} \\
  \leq& \Big(\frac{\ell^{1/\ell}}{(3\ell)^{2\ell/\ell}}+ \frac{\ell}{(3\ell)^{4\ell}}\Big)\cdot m n \\
  \leq& \frac{1}{9 \ell} \cdot mn
 \end{align*}

 which contradicts the lower bound on the number of edges. \qed
\end{proof}

\section{Polynomial kernel on $K_{\ell,\ell}$-free graphs}\label{sec:algo}

In this section we prove the following that implies Theorem~\ref{thm:fpt}.

\begin{theorem}\label{thm:kernel}
 The \tj{} problem admits a kernel of size $h(\ell)\cdot k^{\ell \cdot 3^\ell}$.
\end{theorem}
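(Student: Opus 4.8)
The plan is to reduce the instance to an induced subgraph of bounded size by identifying, for each possible ``neighbourhood type'', a bounded set of interchangeable representatives. First I would build a \emph{core} set $X \supseteq I \cup J$ as follows. Partition $V \setminus X$ into classes, two vertices lying in the same class $C_Y$ when they have exactly the same neighbourhood $Y$ inside $X$. Classes with $|Y| \ge \ell$ are automatically tiny: if $\ell$ vertices shared a common neighbourhood of size $\ge \ell$ we would exhibit a $K_{\ell,\ell}$, so each such class has fewer than $\ell$ vertices, and Theorem~\ref{thm:kstbip} bounds the total number of vertices with $\ge \ell$ neighbours in $X$ by $O(|X|^\ell)$. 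The only troublesome classes are therefore those with $|Y| \le \ell-1$, of which there are at most $O(|X|^{\ell-1})$. For every such class that is still large, Corollary~\ref{coro:nblargefraction} tells me that at most $(3\ell)^{2\ell}$ vertices of $G$ are adjacent to a fraction $\ge 1/(8\ell)$ of it; I would add all of these ``dense'' vertices to $X$ and iterate. The aim is that, when the process stabilises, every large class $C_Y$ has the property that no vertex outside $X$ sees more than a $1/(8\ell)$-fraction of it, while Corollary~\ref{coro:indepk} guarantees that $C_Y$ contains an independent set of size $k$ to host all the tokens.

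Once $X$ is fixed, I would define the kernel as the subgraph induced by $X$, the $O(|X|^\ell)$ vertices with at least $\ell$ neighbours in $X$, and, for each class $C_Y$ with $|Y| \le \ell-1$, a set $D_Y$ of $\mathrm{poly}(k)$ representatives chosen inside $C_Y$ (an independent set of size $\Theta(k)$ together with enough spare vertices to absorb conflicts). Since there are at most $O(|X|^{\ell-1})$ such classes, the kernel has $O(|X|^\ell \cdot \mathrm{poly}(k))$ vertices. A careful accounting of how $|X|$ grows across the bounded number of refinement rounds — each round enlarging $X$ by a factor polynomial in the current $|X|$ — should then yield a bound of the form $h(\ell)\cdot k^{\ell 3^\ell}$ on the kernel, since the dominant term $|X|^\ell$ sits at exponent $\ell$ over an $|X|$ that is itself $k^{O(3^\ell)}$.

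For correctness, the direction ``kernel reconfigurable $\Rightarrow G$ reconfigurable'' is immediate: the kernel is an induced subgraph containing $I$ and $J$, so every independent set appearing in a reconfiguration sequence inside it is independent in $G$ and each jump remains valid. For the converse I would take a sequence $I=S_0,\dots,S_m=J$ in $G$ and reroute it into the kernel step by step, maintaining an independent set $S_i'$ that agrees with $S_i$ on $X$ and on the kept vertices, and that replaces every token lying on a deleted vertex $v \in C_Y$ by a representative in $D_Y$. The point making this possible is that a representative of $C_Y$ is adjacent in $X$ only to $Y$, and whenever a token is placed on $C_Y$ the set $Y$ is token-free; hence the tokens sitting in $X$ never obstruct a representative, and only the other at most $k-1$ tokens currently on representatives can block vertices of $D_Y$. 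Because all dense vertices have been absorbed into $X$ and the sets $D_Y$ are chosen with controlled mutual adjacency, each such token blocks only few representatives, so a free one always remains and the original jump can be imitated.

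The main obstacle is exactly this rerouting step: turning ``each blocking token forbids few representatives'' into a genuine guarantee that a valid representative always exists, uniformly over the whole sequence and across all classes simultaneously. This is where $K_{\ell,\ell}$-freeness must be used quantitatively — through Corollaries~\ref{coro:indepk} and~\ref{coro:nblargefraction} — to force each token to block only an \emph{absolutely} bounded number of representatives (a purely fractional bound would be useless once $k \gg \ell$), which is what dictates the size of $D_Y$ and compels the iterative enlargement of $X$ until every dense vertex is captured. Establishing termination of that enlargement, and the resulting $k^{O(3^\ell)}$ bound on $|X|$, is the second delicate point, since a single refinement round can simultaneously shrink a class and create freshly dense vertices relative to the smaller class.
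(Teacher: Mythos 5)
Your setup matches the paper's skeleton (a growing core $X \supseteq I \cup J$, similarity classes by neighbourhood in $X$, Lemma~\ref{lem:rankl}-type bounds for rank $\geq \ell$, and absorbing dense vertices via Corollary~\ref{coro:nblargefraction}), but the two points you yourself flag as ``delicate'' are precisely where the proof lives, and your intended fixes would not work. First, ``add the dense vertices and iterate until the process stabilises'' has no bounded-round guarantee: as you observe, each absorption round can shrink a class and create freshly dense vertices relative to the smaller class, so stabilisation could take $\Omega(\log n)$ rounds, which destroys any $f(k,\ell)$ bound on $|X|$. The paper's resolution is not a stabilisation argument at all: it runs \emph{exactly} $2k+1$ rounds per rank $j$. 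A class is reduced early only if its dense set $Y$ satisfies $|(N(Y)\cup Y)\cap C| \leq |C|/2$ (Lemma~\ref{clm:lastcase1}); otherwise the class survives all $2k+1$ rounds while losing at least half its vertices per round, producing pairwise \emph{disjoint} dense sets $Y_0,\dots,Y_{2k}$. Since the relevant independent sets $I' \cup J'$ at the cut points of a transformation contain at most $2k$ vertices, pigeonhole gives an index $i$ with $(I' \cup J')\cap Y_i = \emptyset$; every token then sees at most a $\frac{1}{4\ell}$-fraction of $C_i \setminus C_{i+1}$, so a quarter of it is token-free and, being of size $\geq k\ell(4k)^\ell$, hosts an independent set of size $k$ by Corollary~\ref{coro:indepk} (Lemma~\ref{lem:lastcase2}). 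This $2k+1$ cut-off, absent from your proposal, is the central idea.

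Second, your rerouting step needs each token to block an \emph{absolutely} bounded number of representatives in $D_Y$, and you rightly note a fractional bound is useless once $k \gg \ell$ --- but no such absolute bound follows from $K_{\ell,\ell}$-freeness in your induced-subgraph kernel. A non-dense vertex $v \notin X$ may still see up to $|C_Y|/(8\ell)$ vertices of $C_Y$, and those neighbours can concentrate entirely inside your small set $D_Y$; choosing $D_Y$ to dodge all such vertices is circular, since which vertices are dense \emph{relative to $D_Y$} depends on $D_Y$. The paper sidesteps this in two ways your proposal does not: the kernel is \emph{not} an induced subgraph --- a reduced class is replaced by $k$ new vertices adjacent exactly to $N(C)\cap X$, so in the reduced graph no token outside $X$ can touch the surrogate at all --- and correctness is proved not by step-by-step simulation of the sequence but by cutting it at the last set $I'$ before tokens enter the class and the first set $J'$ after they leave it, then parking all $k$ tokens on an independent set inside the part of the class untouched by $N(I'\cup J')$. (Minor further difference: big classes of rank $\leq 1$ are handled by answering YES outright via Lemma~\ref{lem:rank1}, not by keeping representatives.) As written, your argument therefore has genuine gaps at both the $|X|$-bound and the rerouting, even though the decomposition framework is the right one.
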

Let $G$ be a $K_{\ell,\ell}$-free graph and $k$ be an integer. Let $I$ and $J$ be two distinct independent sets of size $k$.
Two vertices $a$ and $b$ are \emph{similar} for a subset $X$ of vertices if both $a$ and $b$ have the same neighborhood in $X$. A \emph{similarity class} (for $X$) is a maximum subset of vertices of $V \setminus X$ with the same neighborhood in $X$.

In Section~\ref{subsec:algo}, we present basic facts and describe the kernel algorithm.
In Section~\ref{sec:sizereduced}, we bound the size of the graph returned by the algorithm. It will be almost straightforward to see that the size of this graph is a function of $k$ and $\ell$. However, we will need additional lemmas to prove that its size is at most $h(\ell) \cdot k^{\ell \cdot 3^\ell}$ and that the algorithm is polynomial when $\ell$ is a fixed constant. Section~\ref{sec:correct} is devoted to prove that the algorithm is correct.

\subsection{The algorithm}\label{subsec:algo}

Let us first briefly informally describe the behavior of Algorithm~\ref{alg1}. During the algorithm, we will update a set $X$ of important vertices. At the beginning of the algorithm we set $X=I \cup J$. At each step of the algorithm, at most $f(k,\ell)$ vertices will be added to $X$. So each similarity class of the previous step will be divided into at most $2^{f(k,\ell)}$ parts. 
The main ingredient of the proof (essentially) consists in showing that, at the end of the algorithm, either the size of a class is bounded or the whole class can be replaced by an independent set of size $k$.
As a by-product, the size of the graph can be bounded by a function of $k$ and $\ell$.

Let $X$ be a set of vertices containing $I \cup J$. The \emph{rank} of a similarity class $C$ for $X$ is the number of neighbors of $C$ in $X$.
Our proof consists in applying different arguments depending on the rank of the similarity classes. We actually consider the $3$ distinct types of classes: classes of rank at least $\ell$, classes of rank at most $1$ and classes of rank at least $2$ and at most $\ell-1$.
The size of the class $C$ in the first two cases can be bounded as shown  in~\cite{ItoKO2014} for $\ell=3$.

\begin{lemma}\label{lem:rankl}
 The size of a class $C$ of rank at least $\ell$ is at most $\ell-1$.
\end{lemma}

\begin{proof}
All the vertices of $C$ have the same neighbors in $X$. Let $Y$ be the set of neighbors of $C$ in $X$. The set $(C,Y)$ is a complete bipartite graph. Since $G$ is $K_{\ell,\ell}$-free, the size of $C$ is at most $\ell-1$ since $Y$ has size at least $\ell$. \qed
\end{proof}

\begin{lemma}\label{lem:rank1}
Let $X$ be a set of vertices containing $I \cup J$. If the size of a class $C$ of rank at most $1$ for $X$ is at least $k \ell (4k)^{\ell}$, then it is possible to transform $I$ into $J$.
\end{lemma}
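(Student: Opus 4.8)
The plan is to exhibit a single canonical independent set $S$ of size $k$ contained in $C$ and to show that both $I$ and $J$ can be transported onto $S$ by sequences of token jumps; since the adjacency of $TJ_k(G)$ is symmetric, this places $I$ and $J$ in the same connected component. Being an induced subgraph of $G$, the set $C$ is itself $K_{\ell,\ell}$-free, and since $|C| \ge k\ell(4k)^\ell$, Corollary~\ref{coro:indepk} guarantees that $C$ contains an independent set of size $k$. The whole difficulty is that a class of rank $1$ has a unique neighbour $w$ in $X$, and \emph{every} vertex of $C$ — in particular every vertex of the target set — is adjacent to $w$; so if $w$ happens to carry a token of $I$, no token can be placed in $C$ until $w$ is vacated.

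First I would treat the easy situation: either $C$ has rank $0$, or it has rank $1$ with its neighbour $w \notin I$. In both cases no vertex of $C$ is adjacent to any vertex of $I$, since the only possible $X$-neighbour of a vertex of $C$ is $w$, which is absent from $I$. Fixing any independent set $S = \{s_1,\dots,s_k\}$ of size $k$ in $C$, I would move the tokens of $I$ onto $S$ one at a time in any order: when the token on $a_i$ is jumped to $s_i$, the vertex $s_i$ is non-adjacent to the already-placed $s_1,\dots,s_{i-1}$ (as $S$ is independent) and to the still-waiting tokens of $I$ (as they lie in $X$ and $s_i$ has no neighbour in $I$). Every intermediate set is thus a $k$-independent set, and $I$ reaches $S$.

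The main obstacle is the case where $C$ has rank $1$ and $w \in I$; here I would build $S$ with extra structure. Using the intermediate claim in the proof of Corollary~\ref{coro:indepk}, pick in $G[C]$ a vertex $p$ of degree less than $|C|/k$; then $C' := C \setminus N[p]$ still has at least $(k-1)\ell\big(4(k-1)\big)^\ell$ vertices (a one-line estimate from $|C| \ge k\ell(4k)^\ell$), so by Corollary~\ref{coro:indepk} it contains an independent set $S'$ of size $k-1$, and $S := S' \cup \{p\}$ is an independent $k$-set of $C$ in which $p$ is non-adjacent to the rest. Writing $w = a_k$, I would first jump the token on $w$ onto $p$: this is legal because $p$'s only possible $X$-neighbour is $w$ itself, so $p$ is non-adjacent to $a_1,\dots,a_{k-1}$. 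Next I would move $a_1,\dots,a_{k-1}$ onto $S'$ one at a time; each target $s_i' \in C'$ is non-adjacent to $p$ (by the choice of $C'$), to the other elements of $S'$ (independence), and to the waiting $a_j$'s (whose only $X$-neighbour $w = a_k$ is already gone). This reaches $S = S' \cup \{p\}$.

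Finally, since the construction of $S$ depends only on $C$ and not on $I$ or $J$, the same argument applied to $J$ yields a token-jumping path from $J$ to the very same $S$; reversing it gives a path from $S$ to $J$, and concatenating produces a transformation of $I$ into $J$. The only quantitative point to verify is the size estimate for $C'$, which follows directly from the degree bound and the hypothesis $|C| \ge k\ell(4k)^\ell$; everything else is bookkeeping to confirm that each intermediate configuration remains independent.
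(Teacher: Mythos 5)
Your proof is correct and follows essentially the same route as the paper: fix one canonical independent $k$-set $S \subseteq C$ via Corollary~\ref{coro:indepk}, jump the token off the unique $X$-neighbour of $C$ first so that the remaining tokens can be transported one by one, and conclude by applying the symmetric argument to $J$. The only divergence is that your special low-degree vertex $p$ and the set $C' = C \setminus N[p]$ are superfluous machinery: since $S$ is independent, every other element of $S$ is automatically non-adjacent to $p$, so the paper simply takes an arbitrary independent $k$-set from Corollary~\ref{coro:indepk} and, assuming without loss of generality that the unique neighbour of $C$ in $I \cup J$ is $x_1$, moves $x_1$ first.
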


\begin{proof}
By Corollary~\ref{coro:indepk}, the graph $G[C]$ admits an independent set $\{ z_1,\ldots,z_k \}$ of size $k$.
Let us denote by respectively $x_1,\ldots,x_k$ and $y_1,\ldots,y_k$ the vertices of $I$ and $J$. Moreover, since $|N(C) \cap X| \leq 1$ and that $I \cup J \subseteq X$, we can assume without loss of generality that no vertex of $C$ is incident to $I \cup J \setminus \{x_1\}$.

We can now transform $I$ into $J$. We first move the token from $x_1$ to $z_1$ and we continue by moving tokens from $x_i$ to $z_i$ for every $i \geq 2$. At each step $\{z_1,\ldots,z_i,x_{i+1},\ldots,x_k \}$ is an independent set. We can similarly prove that $Z$ can be transformed into $J$ which concludes the proof. \qed
\end{proof}

Our approach to deal with the remaining classes consists in adding vertices in $X$ to increase their ranks. Since it is simple to deal with a class of rank at least $\ell$, it provides a way to simplify classes.
However, some vertices might not be incident to the new vertices of $X$, and then their ranks do not increase. The central arguments of the proof consists in proving that we can deal with these vertices if we repeat a ``good'' operation at least $2k+1$ steps (see Lemma \ref{lem:lastcase2}).

The set $X$ is called the set of \emph{important vertices}. Initially, $X=I \cup J$.
We denote by $X_t$ the set $X$ at the beginning of step $t$ and $X_0=I \cup J$.
A class $D$ for $X_{t'}$ is \emph{inherited} from a class $C$ of $X_t$ if $t' >t$ and $D\subseteq C$. We say that $C$ is an \emph{ancestor} of $D$. Note that the rank of $C$ is at most the rank of $D$ since when we add vertices in $X$ the rank can only increase and the set $X$ is increasing during the algorithm.


A similarity class is \emph{big} for $X_t$ if its size is at least $g(k,\ell):=4 \cdot k \ell (4k)^{\ell}$. We say that we \emph{reduce a class $C$} when we replace all the vertices of the class $C$ by an independent set of size $k$ with the same neighborhood in $V \setminus C$: $N(c) \cap X$ where $c$ is any vertex of the class $C$.



\begin{algorithm}[!h]
\caption{Kernel algorithm}
\label{alg1}
\begin{algorithmic}[PERF]
\STATE Let $X_0= I \cup J$ \hfill {\scriptsize Initially important vertices are $I \cup J$}
 \IF{a class of rank $1$ is big}
   \STATE Return a YES instance \hfill {\scriptsize Valid operation, see Lemma~\ref{lem:rank1}}
 \ENDIF
\FOR{every $j=2$ to $\ell-1$}
\FOR{$s=0$ to $2k$}
\STATE $\mathcal{C}_t:=$ big similarity classes of rank $j$ for $X_t$. \hfill {\scriptsize We treat big classes of the current rank $j$}
\STATE $Z= \varnothing$
\FOR{every $C \in \mathcal{C}_t$}
\STATE $Y:= \{ y \in  V \setminus X_t$ such that $|N(y) \cap C| \geq \frac{|C|}{8\ell} \}$. \hfill {\scriptsize $|Y|$ is bounded, Corollary~\ref{coro:nblargefraction}}
 \IF{{$|(N(Y)\cup Y) \cap C| \leq |C|/2$}}
   \STATE Reduce the class $C$. \hfill {\scriptsize Valid operation by Lemma~\ref{clm:lastcase1}}
 \ELSE
   \STATE $Z=Z \cup Y$
 \ENDIF
 \ENDFOR
 \STATE $X_{t+1}:= X_t \cup Z$. \hfill {\scriptsize Update the important vertices (and the classes)}
\ENDFOR
\STATE Reduce all the big classes of rank $j$.  \hfill {\scriptsize Valid operation by Lemma~\ref{lem:lastcase2}}
\ENDFOR
\STATE Return the reduced graph.
\end{algorithmic}
\end{algorithm}

The classes that are in $\mathcal{C}_t$ are said to be \emph{treated} at round $t$. For these classes, we add in $X_t$ all the vertices that are incident to a $1/8\ell$-fraction of the vertices of the class. When we say that we \emph{refine} the classes at step $t$, it means that we partition the vertices of the classes according to their respective neighborhood in the new set $X_{t+1}$. \smallskip

\subsection{Size of the reduced graph}\label{sec:sizereduced}
This part is devoted to prove that the size of the graph output by the algorithm is $h(\ell) \cdot k^{\ell \cdot 3^\ell}$.
When we have finished to treat classes of rank $j$, $j<\ell$, (i.e. when the index of the first loop is at least $j+1$) then either all the classes of rank $j$ have size less than $g(k,\ell)$ or they are replaced by an independent set of size $k$. Since classes of rank $j$ cannot be created further in the algorithm, any class of rank at most $j$ has size at least $g(k,\ell)$ at the end of the algorithm. Moreover, any class of rank $\ell$ has size at most $\ell-1$ by Lemma \ref{lem:rankl}. 

A \emph{step} of the algorithm is an iteration of the second loop (variable $s$) in the algorithm. The value of $j$ at a given step of the algorithm is called the \emph{index} of the step and the value of $s$ is called the \emph{depth} of the step.

Note that at the last step of depth $2k$ and of index $i$, all the classes of rank $i$ are reduced. So at the end of this step, no class of rank $i$ is big anymore. Since the set $X_{t+1}$ contains $X_t$ for every $t$, the classes at step $t+1$ are subsets of classes of rank $t$. So there does not exist any big class of rank $i$ at any step further in the algorithm. In particular we have the following:

\begin{remark}\label{rem:size}
At any step of index $j$, no class of rank $i<j$ is big. Moreover, at the end of the algorithm, no class is big.
\end{remark}

The structure of the algorithm ensures that the algorithm ends. Actually, we have the following:

\begin{remark}\label{rem:numberrounds}
  The number of steps is equal to $(2k+1)\cdot (\ell-2)$.
\end{remark}

Using Corollary~\ref{coro:nblargefraction}, it is simple to prove that the  final size of $X$ is bounded by a function of $k$ and $\ell$. Since the number of classes only depends on $k$ and $\ell$ and each class has bounded size by Remark~\ref{rem:size}, the final size of the graph is bounded in terms of $k$ and $\ell$. The rest of this subsection is devoted to prove a better bound on the size of the final graph. We show that it is actually polynomial if $\ell$ is a fixed constant. The proof will be a consequence of the following lemma.

\begin{lemma}\label{lem:numbersets}
The size of $X$ at the end of the algorithm is at most $h'(\ell) \cdot k^{3^{\ell}}$.
\end{lemma}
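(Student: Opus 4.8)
The plan is to bound $a_j := |X|$ at the end of the phase of index $j$ (all steps with that value of the first-loop variable) by a recursion on $j$, starting from $a_1 = |I\cup J| \le 2k$ and reading off the bound for $a_{\ell-1}$. Throughout I write $g=g(k,\ell)=4k\ell(4k)^\ell$ for the ``big'' threshold, and I may assume $k$ large enough that $g\ge (3\ell)^{4\ell}$ (otherwise $g$, hence every class, is bounded by a function of $\ell$ and the claimed bound is immediate after absorbing it into $h'$). Fix an index $j$ with $2\le j\le \ell-1$.

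First I control the growth during a single step. When a big class $C$ of rank $j$ is treated, the only vertices that can be added to $X$ are those of the associated set $Y=\{y : |N(y)\cap C|\ge |C|/(8\ell)\}$; since $|C|\ge g\ge (3\ell)^{4\ell}$, Corollary~\ref{coro:nblargefraction} gives $|Y|\le (3\ell)^{2\ell}$. Hence a single step enlarges $X$ by at most $(3\ell)^{2\ell}\cdot M$, where $M$ is the number of big classes of rank $j$ present at that step. By Remark~\ref{rem:numberrounds} the phase of index $j$ consists of exactly $2k+1$ steps, so
\[
a_j \;\le\; a_{j-1} + (2k+1)\,(3\ell)^{2\ell}\cdot \max_{\text{step}} M .
\]

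The crux is therefore to bound $M$ by a polynomial in $a_{j-1}$. Two structural facts reduce this to a counting problem on $X$. On the one hand, refining a rank-$j$ class by the newly added set $Z$ leaves \emph{at most one} inherited class whose rank is still exactly $j$ (the vertices of the class with no neighbour in $Z$); combined with Remark~\ref{rem:size} (no class of rank $<j$ is big during index $j$, and ranks never decrease), and since refinement can only shrink a class, every big rank-$j$ class occurring during the phase descends from a big rank-$j$ class already present at the first step of the phase, so it suffices to bound $M$ there. On the other hand, a big class of rank $j$ is determined by its neighbourhood $Y\subseteq X$, a $j$-subset with at least $g$ common neighbours. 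I would bound the number of such $Y$ by building them up one vertex at a time: if a $(j{-}1)$-subset $Y'$ has common neighbourhood $S'$, a vertex $y$ completing $Y'$ to such a $j$-set must satisfy $|N(y)\cap S'|\ge g$, i.e. $y$ is incident to a large fraction of $S'$, and Corollary~\ref{coro:nblargefraction} caps the number of such $y$ by $(3\ell)^{2\ell}$ whenever $|S'|$ is not much larger than $g$. Iterating over the $j\le \ell-1$ coordinates bounds $M$ by an $\ell$-bounded constant times a polynomial in $a_{j-1}$ of $\ell$-bounded degree.

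Substituting this bound into the displayed recursion yields $a_j \le c(\ell)\cdot k \cdot a_{j-1}^{\,c'}$ for constants $c(\ell),c'$ depending only on $\ell$; unfolding from $a_1\le 2k$ over the $\ell-2$ indices produces a tower-free bound whose $k$-exponent is at most $3^\ell$, i.e. $a_{\ell-1}\le h'(\ell)\cdot k^{3^\ell}$ after absorbing all $\ell$-depending factors into $h'$. The main obstacle is the class count $M$: the branching argument is clean only while common neighbourhoods stay within a constant factor of $g$, and the delicate case is when a partial set $Y'$ has a much larger common neighbourhood $S'$, where a completing vertex needs only a small fraction of $S'$ and Corollary~\ref{coro:nblargefraction} no longer limits the branching directly. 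Handling this regime — by decomposing $S'$ into the lower-rank classes composing it, which are necessarily small by Remark~\ref{rem:size}, and charging the branching to them — is where the real work lies, and it is also what pins down the base of the exponent; the value $3$ is generous, since even a quadratic bound $M\le c(\ell)\,a_{j-1}^2$ already gives $k$-exponent $2^{\ell-1}-1\le 3^\ell$.
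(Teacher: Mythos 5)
Your framework reproduces the paper's skeleton correctly: the per-step growth of $X$ by at most $(3\ell)^{2\ell}$ vertices per treated class (Corollary~\ref{coro:nblargefraction}), the $2k+1$ steps per index (Remark~\ref{rem:numberrounds}), and the observation that within the phase of index $j$ every big rank-$j$ class descends from one present at the phase's first step (each refinement leaves at most one sub-class of rank exactly $j$, and by Remark~\ref{rem:size} no big class of rank $<j$ exists then) are all correct and also appear in the paper. But the heart of the lemma --- an actual bound on $M$, the number of big rank-$j$ classes at the start of phase $j$ --- is missing, and you say so yourself. Your extension argument over $j$-subsets of $X$ fails exactly where you locate the difficulty: when the common neighbourhood $S'$ of a partial set $Y'$ is much larger than $g$, a completing vertex needs only $|N(y)\cap S'|\ge g$, far below the $\frac{1}{8\ell}$-fraction threshold of Corollary~\ref{coro:nblargefraction}. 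The proposed repair --- decompose $S'$ into lower-rank classes that are small by Remark~\ref{rem:size} --- is circular: $S'$ is a union of classes of rank \emph{at least} $j-1$, and those of rank $\ge j$ inside it are precisely the (possibly big) classes whose number you are trying to bound; Remark~\ref{rem:size} only controls ranks strictly below the current index. Nor does the fallback save the statement: with the trivial bound $M\le a_{j-1}^{\,j}$ the $k$-exponent obeys $e_j = j\,e_{j-1}+1$, which grows like $e\cdot(\ell-1)!$ (for instance $e_7=8660>6561=3^{8}$), so for $\ell\ge 8$ it overshoots the claimed $k^{3^\ell}$; and your remark that a quadratic bound $M\le c(\ell)a_{j-1}^2$ would suffice is arithmetically fine but such a bound is exactly what is never established.

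The paper closes this gap with a genuinely different count, which is the one idea your proposal lacks. Instead of enumerating heavy $j$-subsets of the (large) set $X$, it counts \emph{creation events}: by Remark~\ref{rem:size}, a big class of rank $j$ can only be created at a step $t$ of index $r<j$, inherited from a big class of rank $j-i$ with $i\ge 1$, and its $i$ \emph{new} neighbours must all lie in the freshly added set $Z=X_{t+1}\setminus X_t$, whose size is at most $N_r(3\ell)^{2\ell}\le N_{j-i}(3\ell)^{2\ell}$. Branching over the small set $Z$ rather than over all of $X$ gives at most $\bigl(N_{j-i}(3\ell)^{2\ell}\bigr)^{i+1}$ creations from rank $j-i$ per step; summing over the $(2k+1)(\ell-2)$ steps yields $N_j \le (2k)^j + \sum_{i=1}^{j-2}\bigl(N_{j-i}(3\ell)^{2\ell}\bigr)^{i+1}(2k+1)(\ell-2)$, whose dominant term ($i=1$) gives a $k$-exponent of $2\cdot 3^{j-1}+O(j)\le 3^j$, hence $N_j\le f_j(\ell)k^{3^j}$ and $|X|\le h'(\ell)\cdot k^{3^\ell}$. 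Until you either prove your sub-polynomial branching bound or adopt this ``new neighbours come from $Z$'' accounting, the proof is incomplete for the stated exponent.
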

\begin{proof}
 Let us denote by $N_j$ an upper bound on the maximum number of big classes of rank $j$ at any step of the algorithm. Let us first determine an upper bound on the number of vertices that are added in $X$ during the steps of index $j$. Let $t$ be a step of index $j$. The number of classes in $\mathcal{C}_t$ is at most $N_j$. Moreover, for each class in $\mathcal{C}_t$,  Corollary~\ref{coro:nblargefraction} ensures that at most $(3\ell)^{2\ell}$ vertices are added in $X$. So the size of $X_{t+1} \setminus X_t$ is at most  $N_j \cdot (3\ell)^{2\ell}$.
 Since there are $2k+1$ steps of index $j$, the number of vertices that are added in $X$ during all the steps of index $j$ is at most $N_j \cdot (2k+1) \cdot (3\ell)^{2\ell}$.
 Thus the set of important vertices $X$ at the end of the algorithm satisfies the following:
 \[ |X| \leq  2k + \sum_{j=2}^{\ell-1} \Big( N_j  \cdot (3\ell)^{2\ell} \Big) \cdot (2k+1). \]

 The remainder of the proof is devoted to find a bound on $N_j$ that immediately provides an upper bound on $|X|$. Let us prove by induction on $j$ that $N_j = f_j(\ell)\cdot k^{3^j}$, with $f_2(\ell)=4$ and $f_j(\ell)=\Big(f_{j-1}(\ell)\cdot (3\ell)^{2\ell}\Big)^j$ is a valid upper bound.

 Since classes are refinement of previous classes and by Remark~\ref{rem:size}, the number of big classes of rank $r$ is non increasing when we are considering steps of index $r$. As an immediate consequence, there are at most $(2k)^2$ big classes of rank $2$, which is the maximum number of classes of rank $2$ when $X=I \cup J$. So the results holds for $j=2$.

 Let $j>2$ and assume that $f_i(\ell)\cdot k^{3^i}$ is an upper bound on $N_i$ for any $2\leq i<j$. We say that a class of rank $j$ is \emph{created} at step $t$ if it is inherited from a class (at step $t$) of rank smaller than $j$. The maximum number of big classes of rank $j$ is at most the initial number of big classes of rank $j$ plus the number of big classes of rank $j$ created at any step of the algorithm.
 Let us count how many big classes of rank $j$ can be created at each step $t$.
 By Remark~\ref{rem:size}, if the index of the step is at least $j$, no new big classes of rank $j$ can be created.
So if a class of rank $j$ is created at step $t$, then the index of $t$ is $j-i$ with $i>0$.

Consider a big class $C$ of rank $j-i$ at step $t$. Let us count how many big classes of rank $j$ can be inherited from this class.
Since $C$ is big, the index $r$ of the step $t$ is at most $j-i$.
As we already noticed, the set $Z= X_{t+1} \setminus X_t$ has size at most $N_r \cdot (3\ell)^{2\ell} \leq N_{j-i} \cdot (3\ell)^{2\ell}$ since $(N_r)_r$ is an increasing sequence. Each class of rank $j$ inherited from $C$ must have $i$ neighbors in $Z$. Since there are at most $(N_{j-i} \cdot (3\ell)^{2\ell})^i$ ways of selecting $i$ vertices in $Z$, the class $C$ can lead to the creation of at most $(N_{j-i} \cdot (3\ell)^{2\ell})^i$ big classes of rank $j$.
By induction hypothesis, the number of big classes of rank $j-i$ is at most $N_{j-i}$. So at step $t$, the number of classes of rank $j$ that are created from classes of rank $j-i$ is at most  $N_{j-i} \cdot (N_{j-i} \cdot (3\ell)^{2\ell})^{i} \leq (N_{j-i} \cdot (3\ell)^{2\ell})^{i+1}$.

The total number of rounds of the algorithm is at most $(2k+1) \cdot (\ell-2)$ by Remark~\ref{rem:numberrounds}. So the number of big classes of rank $j$ that are created all along the algorithm from (big) classes of rank $j-i$ is at most $(N_{j-i} \cdot (3\ell)^{2\ell})^i \cdot (2k+1) \cdot (\ell-2)$.
And then the number of big classes of rank $j$ is at most:

 \[(2k)^j + \sum_{i=1}^{j-2} \Big(N_{j-i} \cdot (3\ell)^{2\ell}\Big)^{i+1} \cdot (2k+1) \cdot (\ell-2) . \]

Let us prove that this is bounded by $ f_j(\ell) \cdot k^{3^j}$.
Using the induction hypothesis, and since $3^{(j-i)}\cdot (i+1)$ is maximal for $i=1$, we have, for $1\leq i \leq j-2$,

 \begin{align*}
    (2k)^j + \sum_{i=1}^{j-2} \Big(N_{j-i} \cdot (3\ell)^{2\ell}\Big)^{i+1} \cdot (2k+1) \cdot (\ell-2)
   \leq & (2k)^j + \sum_{i=1}^{j-2} \Big(f_{j-i}(\ell) \cdot k^{3^{j-i}} \cdot (3\ell)^{2\ell}\Big)^{i+1} \cdot 3k\ell \\
   \leq & j\cdot \Big(f_{j-1} (\ell) \cdot (3\ell)^{2\ell} \Big)^{j-1} \cdot k^{2\cdot 3^{j-1}}\cdot 3k^j \ell\\
\leq & \Big(f_{j-1}(\ell)\cdot (3\ell)^{2\ell}\Big)^j \cdot k^{3^{j}}\\
\leq & f_j(\ell) \cdot k^{3^j}
 \end{align*}

 Hence, $N_j=f_j(\ell)\cdot k^{3^j}$ is an upper bound on the number of big classes of rank $j$.

Finally, the size of $X$ at the end of the algorithm is at most
\[  2k + \sum_{j=2}^{\ell-1} \Big( N_j  \cdot (3\ell)^{2\ell} \Big) \cdot (2k+1) = h'(\ell) \cdot k^{3^{\ell}}. \vspace{-25pt}
 \]
 \vspace{10pt}
\qed
\end{proof}

We have all the ingredients to determine the size of the graph at the end of the algorithm. Let us denote by $X$ the set of important vertices at the end of the algorithm.
For every subset of $X$ of size $\ell$, there exist at most $\ell-1$ vertices incident to them by Lemma~\ref{lem:rankl}. So the number of vertices with at least $\ell$ neighbor on $X$ is at most $(\ell-1)\cdot |X|^{\ell}$.
Moreover every class of rank $0$ or $1$ contains less than $g(k,\ell):=k \cdot \ell\cdot(4k)^{\ell}$ vertices by Lemma~\ref{lem:rank1}. And every class of rank between $2$ and $\ell-1$ has size at most $g(k,\ell)$ by Remark~\ref{rem:size}. Since there are $|X|^{\ell-1}$ classes of rank at most $\ell-1$, Lemma~\ref{lem:numbersets} ensures that the size $s$ of the graph returned by the algorithm is at most
\begin{align*}
  s \leq & (\ell-1) \cdot \Big(h'(\ell) \cdot k^{3^{\ell}}\Big)^\ell + \Big(h'(\ell)\cdot k^{3^\ell}\Big)^{\ell-1} \cdot k \cdot  \ell\cdot(4k)^{\ell} 
  \leq 
  h(\ell) \cdot k^{\ell \cdot 3^{\ell}}.
\end{align*}
So the size of the reduced graph has the size of the claimed kernel.

\paragraph{Complexity of the algorithm.} Let us now briefly discuss the complexity of the algorithm. By Lemma~\ref{lem:numbersets}, the size of $X$ is bounded by a function of $k$ and $\ell$ and is polynomial in $k$ if $\ell$ is a fixed constant.
The only possible non polynomial step of the algorithm would consist in maintaining an exponential number of classes. But Lemma~\ref{lem:rankl} ensures that the number of classes of rank at least $\ell$ is at most $\ell \cdot {\ell \choose |X|}$ which is polynomial if $\ell$ is a constant. So the total number of classes is at most $(\ell+1) \cdot |X|^\ell$ which ensures that this algorithm runs in polynomial time.

\subsection{Equivalence of transformations}\label{sec:correct}

This section is devoted to prove that Algorithm~\ref{alg1} is correct. To do it, we just have to prove that reducing classes does not modify the existence of a transformation. In other words we have to show that $I$ can be transformed into $J$ in the original graph if and only if $I$ can be transformed into $J$ in the reduced graph. In Algorithm~\ref{alg1}, there are two cases where we reduce a class. Lemmas~\ref{clm:lastcase1} and~\ref{lem:lastcase2} ensure that in both cases these reductions are correct.

\begin{lemma}\label{clm:lastcase1}
 Let $C$ be a big class of $\mathcal C_t$.
 Assume moreover that the set $Y$ of vertices of $V \setminus X$ incident to a fraction $\frac{1}{8\ell}$ of the vertices of $C$ satisfies $|N(Y) \cap C| \leq |C|/2$.
 Then there is a transformation of $I$ into $J$ in $G$ if and only if there is a transformation in the graph where $C$ is reduced.
\end{lemma}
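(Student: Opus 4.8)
The plan is to prove the two reachability implications separately. Write $G'$ for the graph obtained by reducing $C$, and let $S=\{s_1,\dots,s_k\}$ be the fresh independent set replacing $C$, each $s_i$ having neighbourhood exactly $N_X:=N(c)\cap X$ (for $c\in C$) in $G'$. Since $C\subseteq V\setminus X$ and $I\cup J\subseteq X$, both $I$ and $J$ are $k$-independent sets of $G$ and of $G'$ avoiding $C$ and $S$, so the claim is that $I,J$ lie in the same component of $TJ_k(G)$ if and only if they do in $TJ_k(G')$. Before starting I record the consequence of the hypothesis: setting $C_0:=C\setminus(N(Y)\cup Y)$, the assumption $|N(Y)\cap C|\le |C|/2$ together with the bound $|Y|\le(3\ell)^{2\ell}$ of Corollary~\ref{coro:nblargefraction} gives $|C_0|\ge |C|/2-|Y|\ge k\ell(4k)^\ell$, so by Corollary~\ref{coro:indepk} the graph $G[C_0]$ contains an independent set of size $k$. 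Moreover no vertex of $C_0$ is adjacent to a vertex of $Y$, and therefore any vertex of $V\setminus X$ adjacent to $C_0$ lies outside $Y$ and has fewer than $|C|/(8\ell)$ neighbours in $C$.

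For the \emph{forward} direction ($G\Rightarrow G'$) I would use the projection $\pi(A)=(A\setminus C)\cup\{s_1,\dots,s_{|A\cap C|}\}$. The only edges of $G'$ incident to $S$ go to $N_X$, and if $A\cap C\ne\varnothing$ then independence of $A$ in $G$ forces $A\cap N_X=\varnothing$; this is exactly what is needed to see that $\pi(A)$ is a $k$-independent set of $G'$. One then checks that every TJ-move of $G$ is sent by $\pi$ either to a TJ-move of $G'$ or to a repeated configuration (precisely when the moved token stays inside $C$): for a move taking a token out of, or into, $C$, validity of the corresponding $G'$-move again follows because a nonempty intersection with $C$ or $S$ forbids tokens on $N_X$. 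Hence any $G$-transformation projects to a $G'$-transformation; note that this direction does not use the hypothesis on $Y$.

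For the \emph{backward} direction ($G'\Rightarrow G$), which is the heart of the lemma, I would \emph{lift} a given $G'$-transformation $I=A'_0,\dots,A'_p=J$ to $G$, treating the clean part $C_0$ as a reservoir that simulates $S$. A configuration $A'$ with $m=|A'\cap S|$ is lifted to $(A'\setminus S)\cup L$, where $L\subseteq C_0$ is an independent set of size $m$ disjoint from $N_G(A'\setminus S)$. When $m\ge 1$, validity of $A'$ in $G'$ forces $(A'\setminus S)\cap N_X=\varnothing$, so the vertices of $A'\setminus S$ that can touch the reservoir lie outside $X$ and, by the remark above, each of them blocks fewer than $|C|/(8\ell)$ vertices of $C_0$; this is exactly where the hypothesis $|N(Y)\cap C|\le|C|/2$ enters, as it is what isolates the reservoir $C_0$ on which the ``popular'' vertices $Y$ have no influence whatsoever. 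A move of the $G'$-sequence not touching $S$ is copied verbatim, while a move adding or removing a token on $S$ is simulated by adding or removing a token on $C_0$, possibly preceded by a few auxiliary moves reshuffling $L$ inside $C_0$.

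The main obstacle is precisely the existence, at every step, of the lifting set $L$. The sparse (non-$Y$) external tokens can a priori block a large part of $C_0$, and a crude union bound over up to $k$ such tokens, each forbidding $|C|/(8\ell)$ vertices, is not by itself conclusive once $k$ is large compared with $\ell$. This is the delicate point: one cannot simply fix $L$ once and for all, but must either maintain $L$ incrementally — placing or removing a single reservoir token at a time while keeping enough free room in $C_0$, controlling the interference through $K_{\ell,\ell}$-freeness of the bipartite graph between the current tokens and $C_0$ — or first reroute the $G'$-sequence so that the reservoir is never asked to receive a token while it is heavily obstructed. Once the lift is shown to exist throughout and consecutive lifts are joined by the auxiliary reshuffling moves, the lifted sequence is a transformation of $I$ into $J$ in $G$, which together with the forward direction yields the desired equivalence.
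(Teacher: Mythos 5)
Your forward direction is correct and essentially as strong as the paper's: the paper, instead of projecting the whole sequence via $\pi$, reroutes it through the single configuration with all $k$ tokens on $C'$ (using only the last set $I'$ before a token enters $C$ and the first set $J'$ after tokens leave), but both arguments work and, as you note, neither uses the hypothesis on $Y$. The genuine gap is in the backward direction, where you stop exactly at the crux: you never prove that the lifting set $L\subseteq C_0$ exists and can be maintained move by move, and you explicitly defer this to one of two unexecuted strategies. The idea you are missing is that no step-by-step lift is needed, because the entire middle of the $G'$-sequence can be discarded. Take $I'$ to be the last independent set before a token first lands on $C'$, and $J'$ the first one after which no token of $C'$ reappears; the segments from $I$ to $I'$ and from $J'$ to $J$ avoid $C'$ and are therefore valid verbatim in $G$. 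The move out of $I'$ (deleting $i_0$, landing on $C'$) certifies that no vertex of $I'\setminus\{i_0\}$ lies in $N(c)\cap X$, and symmetrically for $J'$. Hence it suffices to exhibit \emph{one} independent set $I''$ of size $k$ inside $D:=C\setminus N(I'\cup J')$: then $I'\to I''$ (move $i_0$ first, then the remaining tokens one by one) and $I''\to J'$ (the symmetric transformation reversed) are valid in $G$, and whatever the $G'$-sequence did between $I'$ and $J'$ is irrelevant. The hypothesis on $Y$ is used exactly once, to lower-bound $|D|$: tokens of $I'\cup J'$ lying in $X$ have no neighbours in $C$; the remaining ``popular'' tokens lie in $Y$ and jointly cover at most $|C|/2$ of $C$ by hypothesis; each unpopular token covers less than $|C|/(8\ell)$. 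This leaves $|D|\geq |C|/4 \geq k\ell(4k)^{\ell}$, and Corollary~\ref{coro:indepk} supplies $I''$. One union bound over the at most $2k$ vertices of $I'\cup J'$, applied once, replaces your reservoir maintenance, auxiliary reshuffling, and rerouting machinery entirely.

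That said, the quantitative worry you raise is legitimate, and in fact it touches the paper's own computation: $2k-2$ unpopular tokens, each blocking up to $|C|/(8\ell)$ vertices of $C$, block up to about $k|C|/(4\ell)$, so the paper's claim that $N(B)$ covers at most a quarter of $C$ is only valid when $k$ is at most roughly $\ell$. The natural repair is to define popularity with a threshold depending on $k$ (say a $\frac{1}{8k}$-fraction), at which point Theorem~\ref{thm:kstbip} still bounds the number of popular vertices polynomially and the kernel analysis survives with adjusted constants. But this is a matter of constants, orthogonal to your structural problem: even with corrected thresholds, your proposal does not prove the backward implication, since the existence and move-by-move evolution of $L$ is precisely what is left open, whereas the entry/exit reduction above needs no such invariant.
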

\begin{proof}
Let $G$ be the original graph and $G'$ be the graph where the class $C$ has been replaced by an independent set of size $k$. We denote by $C'$ the independent set of size $k$ that replaces $C$ in $G'$. \smallskip

Assume that there exists a transformation from $I$ to $J$ in $G$. Let us prove that such a sequence also exists in $G'$. Either no independent set in the sequence contains a vertex of $C$, and then the sequence still exists in the graph $G'$. So we may assume that at least one independent set contains vertices of $C$. Let us denote by $I'$ the last independent set of the sequence between $I$ and $J$ such that the sequence between $I$ and $I'$ does not contain any vertex of $C$. In other words, it is possible to move a vertex of $I'$ to a vertex of $C$. Similarly let $J'$ be the first independent set such that the sequence between $J'$ and $J$ does not contain any vertex of $C$. Note that in the graph $G'$, the transformations of $I$ into $I'$ and of $J'$ into $J$ still exist since all the independent sets are in $G[V \setminus C]$ that is not modified.

Let us denote by $c$ the vertex of $C$ in the independent set after $I'$ in the sequence and $i_0$ the vertex deleted from $I'$.
No vertex of $(I' \setminus i_0) \cap X$ has a neighbor in $C$. Otherwise it would not be possible to move $i_0$ on $c$ since sets have to remain independent.
Thus in $G'$ we can move the vertex $i_0$ to any vertex of $C'$ and then move the remaining vertices of $I'$ to $C'$. These operations are possible since for every vertex $c'$ of $C'$, we have $N(c') \subseteq N(c) \cap X$ and $(I \cup \{ c\}) \setminus \{i_0\}$ is an independent set.
Free to reverse the sequence, a similar argument holds for $J'$. So there is a transformation from $I$ to $J$ in the graph $G'$.
\smallskip

Assume now that there exists a transformation from $I$ to $J$ in $G'$. As in the previous case, we can assume that an independent set of the transformation sequence contain a vertex of $C'$. Let us denote by $I'$ the last independent set such that the sequence between $I$ and $I'$ does not contain any vertex of $C'$. Similarly $J'$ is the first independent set such that the sequence between $J'$ and $J$ does not contain any vertex of $C'$.  Let us denote by $i_0$ and $j_0$ the vertices respectively deleted between $I'$ and the next independent set and added between the independent before $J'$ and $J'$.

Note that no vertex of $(I' \setminus i_0) \cap X$ has a neighbor in $C$.  Otherwise the independent set after $I'$ in the sequence would not be independent. Similarly no vertex of $(J' \setminus j_0) \cap X$ has a neighbor in $C$.
Let us partition $F=(I' \cup J') \setminus \{ i_0,j_0 \}$ into two sets $A$ and $B$. The set $A$ is the subset of vertices of  $F$ incident to a fraction of at least $\frac{1}{8\ell}$ of the vertices of $C$ in $G$. By hypothesis on $C$, $N(A) \cap C$ covers at most half of the vertices of $C$. Let $B$ be the complement of $A$ in $F$. Every vertex of $B$  is incident to a fraction of at most $\frac{1}{8\ell}$ of the vertices of $C$ in $G$. So $N(B)$ covers at most a quarter of the vertices of $C$.
 Let us denote by $D$ the set $C \setminus N(I' \cup J')$. The size of $D$ is at least one quarter of the size of $C$. Since $C$ is big, the size of $D$ is at least $k \cdot  \ell\cdot(4k)^{\ell}$.
 Theorem~\ref{coro:indepk} ensures that exists an independent set of size $I''$ at least $k$ in $D$. By construction of $D$, one can move $i_0$ to any vertex of $I''$. And then the remaining vertices of $I'$ to $I''$. Similarly, one can transform $I''$ into $J'$. So there exists a transformation from $I$ to $J$ in the graph $G$ that concludes this proof. \qed
\end{proof}

\begin{lemma}\label{lem:lastcase2}
Let $C$ be a class of rank $j$ when the index of the step equals $j$ and the depth of the step equals $2k$. Assume moreover that the size of $C$ is at least $4k\ell\cdot(4k)^{\ell}$.
Then there is a transformation of $I$ into $J$ in $G$ if and only if there is a transformation in the graph where $C$ is reduced.
\end{lemma}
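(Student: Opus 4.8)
The plan is to prove the equivalence of transformations in $G$ and in the reduced graph $G'$ (where class $C$ is replaced by an independent set $C'$ of size $k$) by the same two-directional argument used in Lemma~\ref{clm:lastcase1}, isolating the first and last independent sets $I'$ and $J'$ of a transformation sequence that touch $C$ (respectively $C'$). The direction showing that a transformation in $G$ yields one in $G'$ is essentially identical to Lemma~\ref{clm:lastcase1}: since no vertex of $(I'\setminus i_0)\cap X$ can have a neighbor in $C$ (else the move onto $C$ would break independence), and $N(c')\subseteq N(c)\cap X$ for the reduced vertices, one routes all the tokens that visited $C$ through $C'$ instead, and symmetrically near $J'$. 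The real work is the reverse direction, where I must produce, inside $C$, a genuine size-$k$ independent set avoiding the neighborhoods of $I'$ and $J'$.

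The key obstacle, and the reason this lemma differs from Lemma~\ref{clm:lastcase1}, is that here we have no a~priori hypothesis bounding $|N(Y)\cap C|$: this $C$ is a class that merely survived as big through all $2k+1$ steps of depth $0,\dots,2k$ at index $j$ without ever being reduced by the $|(N(Y)\cup Y)\cap C|\le|C|/2$ test. So at every one of those $2k+1$ steps, the set $Y_s$ of vertices incident to a $\tfrac{1}{8\ell}$-fraction of the (then-current) class satisfied $|(N(Y_s)\cup Y_s)\cap C|>|C|/2$, and the corresponding $Y_s$ was dumped into $Z\subseteq X_{t+1}$. The plan is to exploit exactly this: after $s=2k$ steps, all these high-degree vertices have been absorbed into $X$. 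I would argue that the rank of $C$ cannot have stayed equal to $j$ through all $2k+1$ steps unless, at the final step, essentially no vertex outside $X$ is incident to more than a $\tfrac{1}{8\ell}$-fraction of $C$ — otherwise adding such vertices to $X$ would have raised the rank of a large sub-part of $C$ beyond $j$, splitting off an inherited class of strictly larger rank and contradicting that $C$ is still a single class of rank exactly $j$ at depth $2k$.

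Concretely, I would show that for $C$ surviving at depth $2k$, the set $F=(I'\cup J')\setminus\{i_0,j_0\}$, which has at most $2k$ vertices, must consist of vertices each incident to at most a $\tfrac{1}{8\ell}$-fraction of $C$ (the "slope" vertices). The intuition is that any vertex hitting a large fraction of $C$ would have been caught into $X$ during one of the $2k+1$ rounds, so that by depth $2k$ it already belongs to $X_t$ and hence cannot be a fresh high-degree neighbor producing a large rank increase; the budget of $2k$ rounds precisely matches the $|F|\le 2k$ vertices we must control. Granting this, $N(F)$ covers at most $2k\cdot\frac{|C|}{8\ell}\le\frac{|C|}{4}$ vertices of $C$, so $D:=C\setminus N(I'\cup J')$ has size at least $\tfrac34|C|\ge 4k\ell(4k)^\ell\cdot\tfrac34 \ge k\ell(4k)^\ell$, whence Corollary~\ref{coro:indepk} yields an independent set $I''$ of size $k$ inside $D$ avoiding all neighbors of $I'\cup J'$. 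Finally I route the tokens: move $i_0$ onto a vertex of $I''$, slide the rest of the tokens that had entered $C'$ onto $I''$, and mirror the argument near $J'$, producing a transformation in $G$. I expect the technically delicate point to be the rigorous bookkeeping that the $2k+1$ depth-steps suffice to absorb every vertex of $F$ into $X$ — i.e. formalizing the claim that surviving $2k+1$ rounds at a fixed rank forces the $\tfrac{1}{8\ell}$-fraction bound on the relevant neighbors — since this is where the otherwise mysterious choice of $2k+1$ iterations is actually consumed.
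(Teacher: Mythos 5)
Your forward direction and the outer scaffolding (the sets $I',J'$, the vertices $i_0,j_0$, ending with Corollary~\ref{coro:indepk} applied to a large region avoiding $N(I'\cup J')$) match the paper. But the heart of your reverse direction rests on a claim that fails, and you flagged it yourself as the ``technically delicate point'': you assert that a class surviving all $2k+1$ steps forces every vertex of $F=(I'\cup J')\setminus\{i_0,j_0\}$ to be incident to at most a $\frac{1}{8\ell}$-fraction of the \emph{final} class $C$, because any high-degree vertex ``would have been caught into $X$'' at some round. The sets $Y_s$, however, are defined relative to the \emph{current} ancestor class $C_s$, and surviving a round does not keep the class intact: precisely because $C_s$ was not reduced, $N(Y_s)\cup Y_s$ covers more than half of $C_s$, and since the inherited class must keep rank $j$ (its vertices acquire no neighbors among the vertices newly added to $X$), the class \emph{halves}: $|C_{s+1}|\le |C_s|/2$. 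Consequently a vertex $x$ can be adjacent to almost all of $C=C_{2k}$ while being adjacent to fewer than a $\frac{1}{8\ell}$-fraction of every earlier $C_s$ (which can be $2^{2k-s}$ times larger); such an $x$ evades $Y_0,\dots,Y_{2k-1}$ and is caught only in $Y_{2k}$ at the very last round --- too late for your absorption argument --- and nothing prevents it from belonging to $I'\cup J'$. Your set $D=C\setminus N(I'\cup J')$ can therefore be far smaller than $\frac34|C|$, and the proof collapses at exactly the step you identified as needing bookkeeping. (A secondary point: even granting your claim, the estimate $2k\cdot\frac{|C|}{8\ell}\le\frac{|C|}{4}$ presumes $k\le\ell$.)

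The paper consumes the $2k+1$ rounds in a genuinely different way, and this is the idea your proposal is missing. The sets $Y_0,\dots,Y_{2k}$ are pairwise \emph{disjoint} (each lives outside the current $X$, and earlier ones were added to $X$), while $|I'\cup J'|\le 2k$; so by pigeonhole some index $i$ satisfies $(I'\cup J')\cap Y_i=\varnothing$. The independent set of size $k$ is then found \emph{not inside the final class} but inside the layer $C_i'=C_i\setminus C_{i+1}$ (or $C_{2k}$ when $i=2k$): vertices of $Y_h$ with $h<i$ have no neighbors in $C_i$ at all, since their addition to $X$ would otherwise have raised the rank of $C_{i}$'s descendants; any vertex of $I'\cup J'$ outside $X_{t_0+i}$ avoids $Y_i$, hence hits fewer than $\frac{|C_i|}{8\ell}\le\frac{|C_i'|}{4\ell}$ vertices of $C_i'$ by the halving inequality; and a vertex of $X_{t_0+i}$ with a neighbor in $C_i$ is adjacent to all of $C_i$ by similarity, hence to $C'$, hence cannot lie in $I'\setminus\{i_0\}$ or $J'\setminus\{j_0\}$. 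Since $|C_i'|\ge|C_i|/2\ge|C_{2k}|/2$, the part of the layer uncovered by $N(I_0\cup J_0)$ still has size at least $k\ell(4k)^{\ell}$ and Corollary~\ref{coro:indepk} applies. Note why this cannot be imitated within your framework: this direction constructs a transformation in the original graph $G$, where the whole ancestor $C_0$ is available, so landing the tokens in a layer of an ancestor class --- possibly entirely outside $C$ --- is legitimate, whereas insisting, as you do, that the landing zone lie inside the final $C$ genuinely fails when $I'\cup J'$ meets $Y_{2k}$.
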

\begin{proof}
Let $G$ be the original graph and $G'$ be the graph where $C$ is reduced. We will denote by $C'$ the independent set of size $k$ that replaces $C$ in $G'$.

Assume that there exists a transformation from $I$ to $J$ in $G$. A sequence also exists in $G'$. The proof works exactly as the proof of the first part of Lemma~\ref{clm:lastcase1}.

Assume now that there exists a transformation from $I$ to $J$ in $G'$. Let us prove that a transformation from $I$ to $J$ also exists on $G$. If none of the independent sets of the sequence contains a vertex of $C'$, the sequence still exists in $G$.  So we can assume that an independent set of the sequence contains a vertex of $C'$. Let us denote by $I'$ the last independent set such that the sequence between $I$ and $I'$ does not contain any vertex of $C'$ and $J'$ the first independent set such that the sequence between $J'$ and $J$ does not contain any vertex of $C'$.
Let us denote by $i_0$ and $j_0$ the vertices respectively deleted between $I'$ and the next independent set and added between the independent before $J'$ and $J'$. We denote by $I_0$ and $J_0$ the sets $I' \setminus i_0$ and $J' \setminus j_0$.
Note that no vertex of $(I' \setminus i_0) \cap X$ has a neighbor in $C$.  Otherwise the independent set after $I'$ in the sequence would not be independent. Similarly no vertex of $(J' \setminus j_0) \cap X$ has a neighbor in $C$.

Let us denote by $t_0$ the step of index $j$ and depth $0$. And let $t$ be the step of index $j$ and depth $2k$. In other words $t=t_0+2k$. Let $C_0,C_1,\ldots,C_{2k}=C$ be the ancestors of $C$ at round $t_0,\ldots,t_0+2k$. All these classes have rank $j$ and $C_{2k} \subseteq C_{2k-1} \subseteq \cdots C_0$. Since $|C_{2k}| \geq  g(k,\ell)$, the same holds for any class $C_i$. In particular, the class $C_i$ is big at step $t_{0}+i$. Since the class $C_i$ is not reduced at step $t_0+i$, the subset of vertices incident to a $\frac{1}{8\ell}$-fraction of the vertices of $C_i$ covers at least half of the vertices of $C_i$. In particular, for every $i <2k$, we have
\begin{equation}\label{eq1}
 |C_{i+1}| \leq |C_i|/2.
\end{equation}

Let $i \leq 2k$. Let us denote by $Y_i$ the set of vertices of $V \setminus X_{t_0+i}$ that are incident to at least $\frac{1}{8\ell}$ of the vertices of $C_i$. Any vertex $y$ of $Y_i$ has no neighbor in $C_h$ for $h>i$. Indeed the set $Y_i$ is added in $X_{t_0+i+1}$  at the end of step $t_0+i$. And by definition of $C_h$, the rank of $C_h$ is still $j$. Note moreover that, if $i \neq h$ then $Y_i$ and $Y_h$ are disjoint.
Since a vertex in $Y_h$ is not incident to  a $\frac{1}{8\ell}$-fraction of the vertices of $C_i$, Equation (\ref{eq1}) ensures for every $i< 2k$ and every vertex $x \notin Y_i$
\[ |N(x) \cap (C_i \setminus C_{i+1})| \leq \frac{|C_i \setminus C_{i+1}|}{4\ell} \]

Moreover, by definition of $Y_{2k}$, every vertex $x$ which is not in $Y_{2k}$ satisfies
\[ |N(x) \cap C_{2k}| \leq \frac{|C_{2k}|}{8\ell} \]

Since the sets $Y_0,\ldots,Y_{2k}$ are disjoint, there exists an index $i$ such that $I' \cup J'$ does not contain any vertex of $Y_i$. Let $C_i' = C_i \setminus C_{i+1}$ (or $C_i'=C_i$ if $i=2k$). Every vertex of $I' \cup J'$ is incident to at most $ \frac{|C_i'|}{4\ell}$ of the vertices of $C_i'$. So the complement of $N(I_0 \cup J_0)$ in $C_i'$, denoted by $C_i''$ has size at least
\begin{align*}
 |C_i''| &\geq \frac{|C_i'|}{2} \geq \frac{|C_i|}{4} \geq \frac{|C_{2k}|}{4} 
         \geq g(k,\ell) \geq k\ell\cdot(4k)^{\ell}.
\end{align*}
By Corollary~\ref{coro:indepk}, $C_i''$ contains an independent set $S$ of size $k$. Since $I'$ and $S$ are anticomplete (up to one vertex, namely $i_0$), one can transform the independent set from $I'$ into $S$. Similarly, one can transform $S$ into $J'$ which completes the proof. \qed
\end{proof}

\section{Bounded VC-dimension}\label{sec:vcdim}

Let $H=(V,E)$ be a hypergraph. A set $X$ of vertices of $H$ is \emph{shattered} if for every subset $Y$ of $X$ there exists a hyperedge $e$ such that $e \cap X = Y$. An intersection between $X$ and a hyperedge $e$ of $E$ is called a \emph{trace} (on $X$). Equivalently, a set $X$ is shattered if all its $2^{|X|}$ traces exist. The \emph{VC-dimension} of a hypergraph is the maximum size of a shattered set.

Let $G=(V,E)$ be a graph. The \emph{closed neighborhood hypergraph} of $G$ is the hypergraph with vertex set $V$ where $X \subseteq V$ is a hyperedge if and only if $X=N[v]$ for some vertex $v \in V$ (where $N[v]$ denotes the closed neighborhood of $v$). The \emph{VC-dimension} of a graph is the VC-dimension of its closed neighborhood hypergraph. The VC-dimension of a class of graphs $\mathcal{C}$ is the maximum VC-dimension of a graph of $\mathcal{C}$.

There is a correlation between VC-dimension and complete bipartite subgraphs. Namely, a $K_{\ell,\ell}$-free graph has VC-dimension at most $\mathcal{O}(\ell)$.
Since the \tj{} problem is W[1]-hard for general graphs and FPT on $K_{\ell,\ell}$-free graphs, one can naturally ask if this result can be extended to graphs of bounded VC-dimension.
Let us remark that the problem is W[1]-hard even on graphs of VC-dimension~$3$. This is a corollary of two simple facts.
First, to prove that the \tj{} problem is W[1]-hard on general graphs, Ito et al.~\cite{ItoKOSUY2014} showed that if the Independent Set problem is W[1]-hard on a class of graph $\mathcal{G}$, then the \tj{} problem is W[1]-hard on the class $\mathcal{G}'$ where graphs of $\mathcal{G}'$ consist in the disjoint union of a graph of $\mathcal{G}$ and a complete bipartite graph.
Note that the VC-dimension of a complete bipartite graph equals $1$. Moreover, if $\mathcal{G}$ is a class closed by disjoint union, then the VC-dimension of the class $\mathcal{G}'$ is equal to the VC-dimension of $\mathcal{G}$. Hence we have the following:

\begin{remark}
 If $\mathcal{C}$ is a class of graphs of VC-dimension at most $d$ closed by disjoint union, then the \tj{} problem on graphs of VC-dimension at most $d$ is at least as hard as the \textsc{Independent Set} problem on $\mathcal{C}$.
\end{remark}

 So any hardness result for \textsc{Independent Set} provides a hardness result for \tj{}.
 The \textsc{Independent Set} problem is W[1]-hard on graphs of VC-dimension at most $3$. Indeed, Marx proved in~\cite{Marx05} that the \textsc{Independent Set} problem is $W[1]$-hard on unit disk graphs, and unit disk graphs have VC-dimension at most $3$ (see for instance~\cite{BousquetLLPT15}). To complete the picture, we have to determine the complexity of the problem for $k=1$ and $k=2$.
 For graphs of VC-dimension $2$, the problem is NP-hard. Indeed the \textsc{Independent Set} problem is NP-complete on graphs of girth at least $5$~\cite{murphy} and this class has VC-dimension at most $2$ (see for instance~\cite{BousquetLLPT15}).

The remaining of this section is devoted to prove that \tj{} can be decided in polynomial time on graphs of VC-dimension at most $1$.

\begin{theorem}\label{thm:VC1}
 The \tj{} problem can be solved in polynomial time on graphs of VC-dimension at most $1$.
\end{theorem}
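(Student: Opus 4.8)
The plan is to first establish a strong structural property of graphs of VC-dimension at most $1$, and then exploit it to reduce \tj{} to an almost trivial token-redistribution problem.

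\textbf{Structural lemma.} The crucial step is to prove that every connected component $H$ of a graph $G$ of VC-dimension at most $1$ has independence number at most $2$. If $H$ is complete this is clear, since $\alpha(H)=1$. Otherwise $H$ contains an induced path $a-c-b$, that is, a non-adjacent pair $\{a,b\}$ with a common neighbour $c$. The traces $N[a]\cap\{a,b\}=\{a\}$, $N[b]\cap\{a,b\}=\{b\}$ and $N[c]\cap\{a,b\}=\{a,b\}$ all exist, so since $\{a,b\}$ is not shattered the empty trace must be missing; this means every vertex distinct from $a,b$ is adjacent to $a$ or to $b$, i.e. $\{a,b\}$ dominates $H$. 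Suppose now, for contradiction, that $H$ contains an independent set $\{x,y,z\}$. Each of $x,y,z$ is dominated by $a$ or by $b$ (the cases where one of them equals $a$ or $b$ are ruled out by independence), so by pigeonhole two of them, say $x,y$, lie in $N(a)$. Then $x,y$ are non-adjacent with common neighbour $a$, and the same argument shows $\{x,y\}$ dominates $H$; but $z$ is adjacent to neither $x$ nor $y$, a contradiction. Hence $\alpha(H)\le 2$. Since any shattered set lies inside a single component, each component of $G$ has VC-dimension at most $1$ and the lemma applies componentwise.

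\textbf{Reduction to a distribution problem.} Because every component has independence number at most $2$, any independent set of $G$ meets each component in $0$, $1$ or $2$ vertices. Assign to a component $X$ the \emph{capacity} $c_X:=1$ if $X$ is complete and $c_X:=2$ otherwise, so that $X$ may host at most $c_X$ tokens. A single token-jump either moves a token inside one component while keeping that component's token count fixed, or transfers a token from one component to another component that is not already full, placing it on a vertex non-adjacent to the tokens already there (no token can be added to a full component, as this would create an independent set of size $3$ there). I would thus view \tj{} as the reconfiguration of this component-wise distribution of $k$ tokens, and define the \emph{slack} as $\big(\sum_X c_X\big)-k$, which is non-negative since $I$ already exists.

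\textbf{Deciding reachability.} The algorithm is a dichotomy on the slack. If the slack is $0$, then in every size-$k$ independent set each component is full, so transfers are never possible and reconfiguration must be purely internal; $I$ and $J$ are then reachable if and only if, in every non-complete component, the two size-$2$ independent sets used by $I$ and by $J$ are connected in the graph whose vertices are the size-$2$ independent sets of that component and whose edges join two sets sharing a vertex (complete components are always free, since a lone token may jump to any vertex). This is decidable in polynomial time by a breadth-first search in each component. If the slack is at least $1$, I would prove that $TJ_k(G)$ is connected, so the answer is always YES: one free unit of capacity serves as a parking space, letting us empty a chosen component down to a single token, relocate that token, and refill the component in the opposite configuration, thereby realising any component-wise distribution. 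I expect the main obstacle to be exactly this \emph{slack lemma}, namely showing rigorously that a single spare unit of capacity suffices to switch the two tokens of an internally rigid component such as an induced $C_4$; I would handle it by an explicit rotation scheme (park, rotate, pull back) together with an induction on the number of components whose configuration still differs from the target. Once the structural lemma is in place, the remaining analysis is elementary and manifestly polynomial, which yields the theorem.
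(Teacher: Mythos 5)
Your proposal is correct in substance but takes a genuinely different route from the paper. The paper never looks at independence numbers of components: it proves a trichotomy for closed neighborhoods of any pair of vertices (disjoint, nested, or forming a dominating pair), shows that a vertex $v$ with $N[u]\subseteq N[v]$ can be deleted without changing reconfigurability, and then proves that in the reduced graph any two independent sets of size at least $3$ have independent union --- so for $k\geq 3$ the answer is always YES by moving tokens one at a time, while $k\leq 2$ is dispatched as obviously polynomial. You instead prove a global structure theorem --- every component has independence number at most $2$ --- and reduce \tj{} to a capacitated token-distribution problem with a slack dichotomy: slack zero forces every move to be internal to a full component and reduces to componentwise BFS on the graph of $2$-element independent sets, while slack at least one makes $TJ_k(G)$ connected. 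Your structural lemma and its proof are essentially correct (the parenthetical ``ruled out by independence'' is not quite right, since one of $x,y,z$ may equal $a$; but that case dies immediately, as the other two vertices then lie in $N[a]\cup N[b]$ and adjacency to $a$ or a dominating pair inside $\{x,y,z\}$ contradicts independence), and your slack lemma, though only sketched, is true and completable exactly along the lines you indicate: the key observation making ``park, rotate, pull back'' go through is that a component with spare capacity can always internally rearrange to accept an incoming token, because a lone token moves freely within its component, so no deadlock occurs. As for what each approach buys: the paper's deletion-based reduction is shorter and dodges your ``main obstacle'' entirely, since after reduction the $k\geq 3$ case trivializes; your route costs more work but yields an explicit characterization of the YES-instances and of the reconfiguration graph's connectivity. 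Note finally that your own trace argument gives a simplification you did not exploit: if $G$ is disconnected and some component contains an induced path $a$--$c$--$b$, then any vertex of another component supplies the empty trace on $\{a,b\}$, which would then be shattered; hence a disconnected graph of VC-dimension at most $1$ is a disjoint union of cliques, so your only nontrivial slack-zero case is a connected graph with $k=2$, which collapses most of your case analysis.
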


Let us give the three lemmas that permits to prove Theorem~\ref{thm:VC1} whose proofs are not included in this extended abstract.

\begin{lemma}\label{lem:neighbourhoods}
     Let $G$ be a graph of VC-dimension at most $1$ and let $u$ and $v$ be two vertices of $G$. Then one of the following holds:

     \begin{enumerate}
           \item The closed neighborhoods of $u$ and $v$ are disjoint.
           \item One of the closed neighborhoods is included in the other.
           \item $u$ and $v$ form a dominating pair.
       \end{enumerate}
\end{lemma}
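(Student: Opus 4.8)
The plan is to show that the trichotomy is precisely the negation of $\{u,v\}$ being shattered in the closed-neighborhood hypergraph, so that the hypothesis $\mathrm{VCdim}(G)\leq 1$ forces at least one of the three cases to hold. First I would dispose of the degenerate situation $u=v$, where $N[u]=N[v]$ and case (2) holds trivially; hence I may assume $u\neq v$, so that $\{u,v\}$ is a genuine two-element vertex set. Throughout I read the dominating-pair condition of case (3) as $N[u]\cup N[v]=V$.

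The key observation is the symmetry of closed neighborhoods: for all vertices $x,y$ one has $x\in N[y]$ if and only if $y\in N[x]$. Using this, I would translate each of the four possible traces on $\{u,v\}$ into a statement about $N[u]$ and $N[v]$. A vertex $w$ realizes the trace $\{u,v\}$ (that is, $u,v\in N[w]$) exactly when $w\in N[u]\cap N[v]$; it realizes the empty trace exactly when $w\notin N[u]\cup N[v]$; it realizes the trace $\{u\}$ exactly when $w\in N[u]\setminus N[v]$; and it realizes the trace $\{v\}$ exactly when $w\in N[v]\setminus N[u]$. Consequently the trace $\{u,v\}$ exists iff $N[u]\cap N[v]\neq\emptyset$, the empty trace exists iff $N[u]\cup N[v]\neq V$, the trace $\{u\}$ exists iff $N[u]\not\subseteq N[v]$, and the trace $\{v\}$ exists iff $N[v]\not\subseteq N[u]$.

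With this dictionary in hand the proof is immediate by contradiction. Suppose none of the three conclusions holds. The failure of (1) yields a vertex in $N[u]\cap N[v]$, hence the trace $\{u,v\}$; the failure of (3) means $N[u]\cup N[v]\neq V$, yielding a vertex outside $N[u]\cup N[v]$, hence the empty trace; and the failure of (2) means that neither $N[u]\subseteq N[v]$ nor $N[v]\subseteq N[u]$, producing a vertex in $N[u]\setminus N[v]$ and a vertex in $N[v]\setminus N[u]$, hence the traces $\{u\}$ and $\{v\}$. All four traces on $\{u,v\}$ are therefore present, so $\{u,v\}$ is shattered, contradicting $\mathrm{VCdim}(G)\leq 1$.

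The entire content of the argument lies in the dictionary between traces and the three neighborhood conditions; once it is established, no case analysis or computation remains. The only point requiring care is the repeated use of the symmetry $x\in N[y]\iff y\in N[x]$, which is what allows the realizing vertex $w$ to be read off directly from membership in $N[u]$ and $N[v]$; I expect this bookkeeping to be the sole, and minor, obstacle.
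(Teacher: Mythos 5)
Your proof is correct and is essentially the paper's argument in contrapositive form: the paper picks a missing trace on $\{u,v\}$ and reads off the corresponding conclusion, while you assume all three conclusions fail and exhibit all four traces, using the same dictionary between traces and neighborhood conditions. Your explicit handling of $u=v$ (needed so that $\{u,v\}$ is genuinely a two-element set) is a minor point the paper leaves implicit.
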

\begin{proof}
    Since the pair $\left\{ u,v \right\}$ is not shattered, at least one of the traces from $\left\{ \emptyset, \left\{ u \right\},\left\{ v \right\}, \left\{ u,v  \right\} \right\}$ is missing. If the trace $\emptyset$ is missing, it means that for every vertex $w$, $N[w]\cap \left\{ u,v \right\}\neq \emptyset$ and then $u,v$ forms a dominating pair. If the trace $\{u\}$ is missing, then for every vertex $w$ such that $u\in N[w]$ we have $v\in N[w]$. Thus $N[u]$ is included in $N[v]$. Similarly, if the trace $\{v\}$ is missing, then $N[v]$ is included in $N[u]$. Finally if the trace $\left\{ u,v \right\}$ is missing then no vertex $w$ has both $u$ and $v$ in its neighborhood and then the neighborhood of $u$ and $v$ are disjoint. \qed
\end{proof}

The following lemma ensures that if the graph contains a vertex satisfying the second point of Lemma  \ref{lem:neighbourhoods}, then it can be deleted.

\begin{lemma}\label{lem:NoInclusion}
 Let $G$ be a graph of VC-dimension at most 1 and let $u$ and $v$ be two vertices such that $N[u]\subseteq N[v]$. Let $I,J$ be two independent set that do not contain $v$. Then there exists a TJ-transformation from $I$ to $J$ in $G$ if and only if there exists a TJ-transformation from $I$ to $J$ in $G':=G \setminus \{ v \}$.
\end{lemma}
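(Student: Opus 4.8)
The plan is to prove the two directions separately: one is immediate, and the other rests on a single substitution operation. The easy direction is the backward one. Since $G' = G \setminus \{v\}$ is an induced subgraph of $G$, every $k$-independent set of $G'$ is also a $k$-independent set of $G$, and every token jump that is valid in $G'$ stays valid in $G$. Hence any TJ-transformation from $I$ to $J$ in $G'$ is verbatim a TJ-transformation in $G$, and this direction needs no further work.

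For the forward direction I would first record the structural consequence of the hypothesis $N[u] \subseteq N[v]$. Since $u \in N[u] \subseteq N[v]$ and $u \neq v$, the vertices $u$ and $v$ are adjacent, and every neighbor of $u$ lies in $N[v] = \{v\} \cup N(v)$. I then define a substitution map $\phi$ on $k$-independent sets by $\phi(S) = (S \setminus \{v\}) \cup \{u\}$ if $v \in S$ and $\phi(S) = S$ otherwise. The first key claim is that $\phi(S)$ is always a $k$-independent set of $G'$. Indeed $\phi(S)$ never contains $v$, and when $v \in S$ the independence of $S$ forces $N(v) \cap S = \emptyset$ and $u \notin S$ (because $u$ is a neighbor of $v$); since the neighbors of $u$ all lie in $\{v\} \cup N(v)$, they are disjoint from $S \setminus \{v\}$, so $\phi(S)$ is independent and of size $k$. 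Note that this claim uses only $N[u] \subseteq N[v]$, and not the bound on the VC-dimension.

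The heart of the argument is to show that $\phi$ transports transformations. Given a transformation $I = S_0, S_1, \dots, S_m = J$ in $G$, I would argue that for each $i$ the sets $\phi(S_i)$ and $\phi(S_{i+1})$ are either equal or adjacent in $TJ_k(G')$, via a short case analysis on whether $v$ belongs to $S_i$ and to $S_{i+1}$, writing the jump as $S_{i+1} = (S_i \setminus \{a\}) \cup \{b\}$. If $v$ is in neither set, nothing changes. If $v$ is in both, then $a, b \neq v$ and $a, b \neq u$ (as $u$ cannot coexist with $v$ in an independent set), and one checks directly that $\phi(S_{i+1}) = (\phi(S_i) \setminus \{a\}) \cup \{b\}$ is again a single jump. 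The interesting case is when $v$ is gained or lost, say $b = v$: then either $u \notin S_i$, so replacing the freshly added $v$ by $u$ turns the step into the jump removing $a$ and adding $u$, or $a = u$, in which case $\phi(S_i) = \phi(S_{i+1})$ and the step collapses; a size count rules out the remaining possibility $u \in S_i \setminus \{a\}$. Concatenating the images and deleting consecutive repetitions then yields a genuine TJ-transformation from $\phi(I) = I$ to $\phi(J) = J$ inside $G'$, completing the proof.

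The main obstacle is exactly the bookkeeping in this last case analysis: one must check that whenever $v$ is introduced or removed the substituted sets stay of size exactly $k$ and differ by at most one token, which is where the adjacency $u \in N(v)$ and the size argument are used. Everything else is routine.
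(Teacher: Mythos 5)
Your proof is correct and takes essentially the same route as the paper: the backward direction is immediate from $G'$ being a subgraph, and the forward direction substitutes $u$ for $v$ in every independent set of the sequence, using $uv\in E$ (so no set contains both, hence sizes stay $k$) and $N[u]\subseteq N[v]$ (so independence is preserved). Your explicit case analysis of how each jump transports under the substitution map $\phi$, including the collapsed steps that must be deleted, merely fills in bookkeeping the paper leaves implicit; like the paper's proof, yours never uses the VC-dimension hypothesis.
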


\begin{proof}
 \noindent$(\Leftarrow)$ Any transformation in $G'$ also is a transformation in $G$ since $G$ contains $G'$. \smallskip

 \noindent$(\Rightarrow)$ Consider a transformation from $I$ to $J$ in $G$. Either no independent set contain $v$ and the same transformation exists in $G'$. Or some independent sets contain $v$. Now replace in all the independent sets of the sequence the vertex $v$ by the vertex $u$. Note that all these sets still have size $k$. Indeed $(u,v)$ is an edge of $G$ and then no independent set contain both $u$ and $v$.
 Moreover after this replacement, all the sets are still independent since $N[u] \subseteq N[v]$. \qed
\end{proof}

Note moreover that if $I$ (or $J$) contains $v$ then we can transform $I$ into $I \cup \{ u \} \setminus \{ v \}$ in one step. Lemma~\ref{lem:NoInclusion} combined with this remark ensures that we can reduce the graph in such a way no vertex satisfies the second point of Lemma~\ref{lem:neighbourhoods}.

\begin{lemma}\label{lem:finvc1}
     Let $G$ be a graph of VC-dimension at most $1$ such that no pair of vertices satisfies the second point of Lemma  \ref{lem:neighbourhoods}. Let $I$ and $J$ be two independent sets of size at least $3$, then $I\cup J$ is an independent set.
\end{lemma}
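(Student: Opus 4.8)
The plan is to argue by contradiction, applying the trichotomy of Lemma~\ref{lem:neighbourhoods} twice. Suppose $I \cup J$ is not an independent set. Since $I$ and $J$ are themselves independent, any edge with both endpoints in $I \cup J$ must have one endpoint $a \in I$ and the other endpoint $b \in J$; moreover $b \notin I$ (otherwise $ab$ would be an edge inside the independent set $I$), and in particular $a \neq b$. I fix such an edge $ab$.

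First I would apply Lemma~\ref{lem:neighbourhoods} to the pair $\{a,b\}$. Since $ab$ is an edge we have $a \in N[a] \cap N[b]$, so their closed neighborhoods are not disjoint and the first point fails; the second point fails by hypothesis; hence $\{a,b\}$ is a dominating pair. Now I use the size assumption: as $|I| \geq 3$, I can pick two distinct vertices $x_1, x_2 \in I \setminus \{a\}$. Each $x_i$ is distinct from $a$ and, by independence of $I$, non-adjacent to $a$, so $x_i \notin N[a]$; domination of $\{a,b\}$ then forces $x_i \in N[b]$, and since $x_i \neq b$ (because $b \notin I$) this means $x_i$ is adjacent to $b$.

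To finish I would apply Lemma~\ref{lem:neighbourhoods} a second time, now to the pair $\{a, x_1\}$. Both $a$ and $x_1$ are adjacent to $b$, so $b \in N[a] \cap N[x_1]$ and the first point again fails; the second fails by hypothesis; hence $\{a, x_1\}$ is also a dominating pair. But the vertex $x_2$ is distinct from both $a$ and $x_1$ and, since all three lie in the independent set $I$, it is adjacent to neither; thus $x_2 \notin N[a] \cup N[x_1]$, contradicting the fact that $\{a,x_1\}$ dominates. This contradiction shows that $I \cup J$ is independent.

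The only real subtlety, and the step I expect to be the crux, is to resist trying to shatter the pair $\{a,b\}$ directly: that pair is a dominating pair, so its only missing trace is $\emptyset$, which is perfectly consistent with VC-dimension $1$ and yields no contradiction. The contradiction instead comes from applying the trichotomy a second time to an intra-$I$ pair $\{a,x_1\}$ that has acquired a common neighbor $b$, and exhibiting the failure of domination through a third vertex $x_2$ of $I$ — which is exactly why the hypothesis $|I| \geq 3$ (rather than $\geq 2$) is needed. Note that, by this argument, the size bound is only used on $I$, so $|J| \geq 3$ is in fact not required.
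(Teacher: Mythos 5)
Your proof is correct and takes essentially the same route as the paper's: apply the trichotomy of Lemma~\ref{lem:neighbourhoods} to the endpoints of a putative edge between $I$ and $J$ to obtain a dominating pair, deduce that the $J$-endpoint is adjacent to all of $I\setminus\{a\}$, then apply the trichotomy again to the intra-$I$ pair $\{a,x_1\}$ (which now share the neighbor $b$) and refute domination with a third vertex $x_2\in I$. Your closing remark that only $|I|\geq 3$ is used is accurate and also holds for the paper's proof.
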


\begin{proof}
      Let $u\in I$ and $v\in J$ and assume that $uv\in E$. Since there is no inclusion between the closed neighborhoods of $u$ and $v$, and since their closed neighborhoods are not disjoint, by Lemma \ref{lem:neighbourhoods}, $u,v$ forms a dominating pair. Now since $I$ is independent and $u\in I$, $v$ is adjacent to every vertices of $I$. Now let $w$ be a vertex of $I$ different from $u$. We have $v\in N[u]\cap N[w]$, so the closed neighborhoods of $u$ and $w$ are not disjoint. Thus by Lemma \ref{lem:neighbourhoods} $u,w$ is also a dominating pair. So if $I$ contains a vertex different from $u$ and $w$, it should be adjacent to at least one of them, contradicting the fact that $I$ is independent. \qed
\end{proof}

This completes the proof of Theorem \ref{thm:VC1} since either $I$ and $J$ have size at most $2$ and the problem is obviously polynomial. Or $I\cup J$ is an independent set and one can simply move every vertex  from $I$ to $J$ one by one.

\begin{question}
 Is the \tj{} problem FPT on graphs of VC-dimension at most~$2$?
\end{question}

\bibliographystyle{plain}
\bibliography{biblio}

\end{document}